\documentclass[12pt]{article}
\usepackage{geometry}
\usepackage{IEEEtrantools}
\usepackage{amsmath,amsfonts}
\usepackage{algorithmic}
\usepackage{algorithm}
\usepackage{array}
\usepackage[caption=false,font=normalsize,labelfont=sf,textfont=sf]{subfig}
\usepackage{textcomp}
\usepackage{stfloats}
\usepackage{url}
\usepackage{verbatim}
\usepackage{graphicx}
\usepackage{cite}
\usepackage{macros}
\usepackage{tikz}
\usepackage{float}
\usepackage{setspace}
\usetikzlibrary{positioning}
\usetikzlibrary{calc,matrix} 
\usepackage{caption}
\usepackage{enumerate}
\newtheorem{theorem}{Theorem}
\newtheorem{lemma}{Lemma}
\newtheorem{remark}{Remark}

  \newtheorem{assumption}{Assumption}
 
\begin{document}

\onehalfspacing

\title{Rates and Nearly-Lossless State over a 
  Multi-Access Channel with Causal State Information}

\author{Amos Lapidoth \and Baohua Ni%
    \thanks{A. Lapidoth and B. Ni are with the Department of Information Technology and Electrical Engineering, ETH Zurich, Zurich, Switzerland. Email: \{lapidoth, baohni\}@isi.ee.ethz.ch}
}

\maketitle

\begin{abstract}
  The problem of transmitting both data and state over a memoryless
  state-dependent multi-access channel with causal state information
  is addressed. The state comprises three independent components: a
  common component, which is revealed to both encoders, and two private
  components, each of which is revealed to a different encoder. The
  data are to be recovered with arbitrarily-small probability of
  error, and the state sequence with arbitrarily-small symbol-error
  rate.

  Under the assumption that the capacity region in the absence of a
  state reconstruction constraint is nonempty, the capacity region is
  derived and is shown to be achievable using a Block-Markov coding
  scheme employing binning and superposition coding.
\end{abstract}

\section{Introduction}

This paper studies the communication rates that can be supported by a
state-dependent discrete memoryless multi-access channel (SD-MAC)
when, in addition to recovering the transmitted data reliably, the
receiver must also recover the state sequence.
We assume that the state is revealed to the transmitters causally and
that the receiver must reconstruct the state sequence
``almost losslessly,'' i.e., that the error rate in its reconstruction
of the state sequence must vanish with the blocklength.

We do not address the case where some distortion is allowed in the
state reconstruction.
Our work can thus be viewed as an extension of
\cite{choudhuri2013causal}
to multi-access networks, albeit for the special case of zero
distortion.
We also do not consider feedback, which is studied in the single-user
case (with distortion constraints) in \cite{bross2018rate}. Finally,
we only consider memoryless SD-MACs where the state is IID. The
general case is beyond the scope of this paper, but see
\cite{lapidoth2025rate} for the Gilbert-Elliott single-user channel.

To address different architectures, we model the state $S$ as
comprising three independent components: a component $S_{0}$ that is
revealed to both encoders, a component $S_{1}$ that is revealed only
to Encoder~1 and a component $S_{2}$ that is revealed only to
Encoder~2; see Figure~\ref{fig:simplified_mac}. Special cases are the
``common state'' architecture (where the multi-access is governed by a
single state sequence that is revealed to both encoders) and the
``independent states'' architecture (where the channel is governed by
two independent states each of which is revealed to a different
encoder).

\begin{figure}[htbp]
\definecolor{ao}{rgb}{0,0.5,0}
\small
\begin{tikzpicture}[x=0.85cm, y=0.85cm, line width=0.6pt]

\draw[ao] (-6,2) rectangle (-4,3);
\draw[ao] (-8,0) rectangle (-6,1); 
\node [ao] at(-5,2.5) {Enc 1};
\node [ao] at(-7,0.5) {Enc 2};

\draw[->,ao] (-6.75,2.5) -- (-6,2.5);
\draw[->,ao] (-9,0.5) -- (-8,0.5);
\node [anchor=south,ao] at (-6.5,2.5) {$m_1$};
\node [anchor=south,ao] at (-8.5,0.5) {$m_2$};

\draw[->,ao] (-4,2.5) -- (-3,2.5);
\draw[->,ao] (-6,0.5) -- (-3,0.5);
\node [anchor=south,blue] at (-3.5,2.5) {$X_1$};
\node [anchor=south,blue] at (-3.5,0.5) {$X_2$};

\draw[red] (-3,0) rectangle (2,3);
\node at (-0.5,1.5) {$W(y\,|\,x_1,x_2,s_0,s_1,s_2)$};

\draw[ao,->] (2,1.5) -- (5,1.5);
\node [anchor=south,blue] at (3.5,1.5) {$Y$};

\draw [red] (-3,4.5) rectangle (2,6);
\draw [red] (-2.6,4.85) rectangle (-1.8,5.65);
\draw [red] (-0.9,4.85) rectangle (-0.1,5.65);
\draw [red] (0.8,4.85) rectangle (1.6,5.65);
\node [blue] at (-2.2,5.25) {$S_1$};
\node [blue] at (-0.5,5.25) {$S_0$};
\node [blue] at (1.2,5.25) {$S_2$};

\draw [->,ao] (-2.2,4.85) -- (-2.2,3);
\draw [->,ao] (-0.5,4.85) -- (-0.5,3);
\draw [->,ao] (1.2,4.85) -- (1.2,3);

\draw [ao] (-2.2,5.65) -- (-2.2,6.5);
\draw [ao] (-2.2,6.5) -- (-5,6.5);
\draw [->,ao] (-5,6.5) -- (-5,3);
\node [blue,anchor=south] at (-3.5,6.5) {$S_1$};

\draw [ao] (-0.5,5.65) -- (-0.5,7.25);
\draw [ao] (-0.5,7.25) -- (-5.5,7.25);
\draw [->,ao] (-5.5,7.25) -- (-5.5,3);
\draw [ao] (-0.5,7.25) -- (-7,7.25);
\draw [->,ao] (-7,7.25) -- (-7,1);
\node [blue,anchor=south] at (-3.5,7.25) {$S_0$};

\draw [ao] (1.2,5.65) -- (1.2,8);
\draw [ao] (1.2,8) -- (-7.5,8);
\draw [->,ao] (-7.5,8) -- (-7.5,1);
\node [blue,anchor=south] at (-3.5,8) {$S_2$};

\draw[ao] (5,0) rectangle (7,3);
\node[ao] at(6,1.5) {Dec};

\draw[ao,->] (7,2.6) -- (8.5,2.6);
\draw[ao,->] (7,2.05) -- (8.5,2.05);
\draw[ao,->] (7,1.5) -- (8.5,1.5);
\draw[ao,->] (7,0.95) -- (8.5,0.95);
\draw[ao,->] (7,0.4) -- (8.5,0.4);
\node[ao,anchor=west] at(8.5,2.6) {$\hat{m}_1$};
\node[ao,anchor=west] at(8.5,2.05) {$\hat{m}_2$};
\node[ao,anchor=west] at(8.5,1.5) {$\hat{S}_0$};
\node[ao,anchor=west] at(8.5,0.95) {$\hat{S}_1$};
\node[ao,anchor=west] at(8.5,0.4) {$\hat{S}_2$};

\end{tikzpicture}
\centering
\captionsetup{justification=centering}
\caption{The network under consideration.}
\label{fig:simplified_mac}
\end{figure}

Our main result is a characterization of the capacity region, i.e.,
the set of rate pairs that can be reliably transmitted over the
network under the constraint that the receiver recover the state
sequence ``almost losslessly,'' i.e., with vanishing symbol error
rate. The coding scheme we propose 
is a Block-Markov scheme with backward decoding. Key is that, using
superposition coding (as in \cite{slepian1973noiseless}), the
transmitters cooperate to transmit some of the past state information
that they share.  To guarantee that, once transmission terminates, the
backward decoding can commence, we assume that the capacity region of
the SD-MAC (with causal side-information but without any
state-reconstruction constraints) has a nonempty interior.

It is noteworthy that introducing the state-reconstruction constraints
makes the problem more tractable: The capacity region of the SD-MAC
with causal state information (and no state-reconstruction
constraints) is still unknown. It was studied in
\cite{lapidoth2012common,lapidoth2012double,li2012multiple}, where it was shown that
Shannon strategies are not optimal and that strictly-causal state
information is beneficial.

The literature on state-dependent channels is vast. We
mention here a small fraction that focuses on both the data to be
transmitted and the channel state.
Sutivong et al. \cite{sutivong2005channel} studied joint message
transmission and state estimation over Gaussian channels with
noncausal encoder state information. Kim et
al. \cite{kim2008state} considered state amplification for both discrete and continuous
channels with noncausal state information. Zhang et al. \cite{zhang2011joint}
studied joint message transmission and state estimation when the
encoders know only the state distribution but not its realization;
they also extended their results to multiple-access channels.

More recently, Li et al. \cite{li2024achievable} derived an achievable
rate-distortion region for joint state and message communication over
multiple-access channels with causal state information and generalized
feedback.  Subsequent work \cite{li2025simultaneous} developed a
simultaneous-decoding approach and obtained improved achievable
regions for multiuser channels. While our setting can be viewed as a
special case of theirs (when the generalized feedback is useless, and
the allowed distortion is zero), we provide an exact characterization
of the capacity region for multiple-access channels under the
nearly-lossless state-reconstruction requirement.

A complementary line of research concerns state masking, which seeks
to limit the information about the state revealed to the receiver
\cite{merhav2007information}. Le Treust and Bloch \cite{le2020state}
studied both state masking and amplification within a unified
framework. Also related are studies of integrated sensing and
communication in which, rather than by the receivers, the state
estimation is performed by the transmitters (utilizing channel
feedback) \cite{kobayashi2019joint,ahmadipour2022information,
  liu2025fundamental}.

The rest of the paper is organized as follows. After some
notation, we turn to Section~\ref{sec:setup}, which describes the
communication scenario in greater detail and which defines the
capacity region we seek. Section~\ref{sec:main} presents the main
theorem characterizing said region. The proof is presented in
Section~\ref{sec:converse} (converse) and
Section~\ref{sec:seq-achievability} (achievability).

\subsection{Notation}
When $k$ is a positive integer, $[1:k]$ denotes the set
$\{1, \ldots, k\}$. More generally, for any positive real number
$a > 0$, we use $[1:a]$ to denote the (possibly empty) set
$\{1,2, \ldots, \lfloor a \rfloor \}$. Upper-case calligraphic fonts
denote finite sets, e.g., $\set{S}$ denotes the set of possible
channel states. The $n$-fold Cartesian product of a set $\set{A}$ with
itself is denoted~$\set{A}^{\times n}$ and comprises all $n$-tuples
with entries from $\set{A}$. The cardinality of a set $\set{A}$ is
denoted~$|\set{A}|$, so $|\set{S}|$ denotes the number of channel
states. A random variable taking values in the set $\set{A}$ is
usually denoted~$A$ and its Probability Mass Function (PMF) $P_{A}$.
Hence, the random channel state $S$ takes values in $\set{S}$ and its
PMF is $P_{S}$. A generic realization of such a random variable is
usually denoted~$a$, so $P_{S}(s)$ is the probability that $S$ equals
$s$. Given a PMF $P_X$ and a conditional PMF $P_{Y|X}$, we write
$P_X \, P_{Y|X}$ for the joint PMF that assigns the pair $(x,y)$ the
probability $P_{X}(x) \, P_{Y|X}(y|x)$.
The indicator function of an event $ \mathcal{E} $ is denoted
$ \mathbb{I}\{\mathcal{E}\} $ and equals 1 if $ \mathcal{E} $ holds
and equals 0 otherwise. For example, $ \mathbb{I}\{x = y\} $ is 1 if
$ x = y $, and is 0 otherwise.

The empirical PMF $P_{a^n}$ of a sequence $a^n \in \set{A}^{\times n}$
is defined as
\begin{IEEEeqnarray}{c}
 P_{a^n}(a)
 \triangleq
 \frac{1}{n}\sum_{i=1}^n \mathbb{I}\{a_i=a\},
 \qquad a\in\set{A}.
 \label{eq:empirical-pmf}
\end{IEEEeqnarray}
Given a PMF $P_A$ on $\set{A}$
and some $\delta>0$, the strongly $\delta$-typical set $\mathcal T_\delta^{(n)}(P_A)$ is defined as in \cite[Definition~4.1]{moser2025advanced}:
\begin{IEEEeqnarray}{rCl}
 \mathcal T_\delta^{(n)}(P_A)
 &\triangleq&
 \biggl\{a^n\in\set{A}^{\times n}:
 \Bigl( |P_{a^n}(a)-P_A(a)| < \frac{\delta}{|\set{A}|}, \quad \forall
 a \in\set{A} \Bigr) \; \text{and} \;
 \nonumber\\
 &&\hspace{39mm}
 \Bigl( (P_A(a)=0) \implies (P_{a^n}(a)=0 ) \Bigr) \biggr\}.
 \label{eq:strong-typical-set}
\end{IEEEeqnarray}
We abbreviate $\mathcal T_\delta^{(n)}(P_A)$ as
$\mathcal T_\delta^{(n)}(A)$. More generally,
$\mathcal T_\delta^{(n)}(A_1,\ldots,A_m)$ denotes the strongly typical
set with respect to the joint PMF $P_{A_1\cdots A_m}$ on
$\set{A}_1\times\cdots\times\set{A}_m$. 

All logarithms are of base two. The binary entropy function is denoted
by $H_{\mathrm b}(p)$.

\section{The Setup}\label{sec:setup}

We consider a two-to-one SD-MAC with finite input alphabets
$\set{X}_1$, $\set{X}_2$ and with a finite output alphabet
$\set{Y}$. The state alphabet $\set{S}$ is finite and has the form
\begin{IEEEeqnarray}{rCl}
\set{S} & = & \set{S}_0 \times \set{S}_1 \times \set{S}_2  
\end{IEEEeqnarray}
so the time $i$ state $S_{i}$ is a triple
\begin{IEEEeqnarray}{rCl}
  S_{i} & = & \bigl( S_{0,i},S_{1,i},S_{2,i} \bigr).
\end{IEEEeqnarray}
The states $\{S_{i}\}$ are IID $\sim P_{S}$, with $P_{S}$ having the product
form
\begin{IEEEeqnarray}{rCl}
  P_{S} & = & P_{S_0} \, P_{S_1} \, P_{S_2}.
\end{IEEEeqnarray}
The three components $S_{0,i}$, $S_{1,i}$, $S_{2,i}$ of the Time-$i$
state $S_{i}$ are thus independent and of PMFs $P_{S_0}$, $P_{S_1}$,
and $P_{S_2}$. By possibly redefining the state alphabet, we may
assume---as we shall---that $P_{S}$ is of full support (i.e., that
$P_{S}(s)$ is positive for every state $s \in \set{S}$).
Conditional on the states, the channel is memoryless, and we denote
its law by $P_{Y|X_1X_2S_0S_1S_2}$ or $W_{Y|X_1X_2S_0S_1S_2}$.

Each encoder wishes to convey a message to the decoder: Encoder~1
wishes to convey $M_{1}$ and Encoder~2 $M_{2}$. The two messages are
independent, with $M_{1}$ uniformly distributed over the set
$\set{M}_{1} = \bigl[1:2^{nR_{1}}\bigr]$, and with $M_{2}$ uniformly
distributed over $\set{M}_{2} = \bigl[1:2^{nR_{2}}\bigr]$. Here $n$
denotes the blocklength, and $R_{1}$, $R_{2}$ denote the rates of
transmission in bits per channel-use.

The sequences $\{S_{0,i}\}$ and $\{S_{1,i}\}$ are revealed to
Encoder~1 causally, so the symbol
$X_{1,i} = X_{1,i}(M_{1},S_{0}^{i},S_{1}^{i})$ it produces at Time-$i$
can depend not only on its message $M_{1}$ but also on the
present-and-past values of the sequences $\{S_{0,i}\}$ and
$\{S_{1,i}\}$. Encoder~1 is thus specified by $n$ functions
\begin{subequations}
  \label{block:def_f_no_cribbing}
\begin{IEEEeqnarray}{rCl}
    f_{1,i}:\set{M}_1\times \set{S}_0^{\times i}
    \times\set{S}_1^{\times i}\to
    \set{X}_1, \qquad i \in [1:n]
  \end{IEEEeqnarray}
  where $n$ is the blocklength.  Likewise, Encoder~2, which is
  revealed the sequences $\{S_{0,i}\}$ and $\{S_{2,i}\}$ causally,
  produces the Time-$i$ symbol
  $X_{2,i} = X_{2,i}(M_{2},S_{0}^{i},S_{2}^{i})$ in accordance with
  the $n$~functions
\begin{IEEEeqnarray}{rCl}
    f_{2,i}:\set{M}_2\times \set{S}_0^{\times i}
    \times\set{S}_2^{\times i}\to
    \set{X}_2, \qquad  i \in [1:n].
\end{IEEEeqnarray}
\end{subequations}

Based on the received output sequence $Y^n$, the receiver wishes to
recover the messages as well as the state sequence. It thus has the form
\begin{IEEEeqnarray}{c}
  \label{eq:def_decoding}
  \phi\colon \set{Y}^{\times n}\to
  \set{M}_1\times\set{M}_2\times \set{S}^{\times n}
\end{IEEEeqnarray}
with the understanding that it maps $Y^{n}$ to
$\bigl(\hat{M}_{1},\hat{M}_{2}, \hat{S}_{0}^{n}, \hat{S}_{1}^{n},
\hat{S}_{2}^{n}\bigr)$.

The message-error decoding probability is 
\begin{IEEEeqnarray}{c}
    P_\textnormal{e}^{(n)} \triangleq \text{Pr}\Bigl((\hat{M}_1,\hat{M}_2) \neq (M_1, M_2) \Bigr)
\end{IEEEeqnarray}
and the state-reconstruction symbol error-rate is
\begin{IEEEeqnarray}{c}
    \beta^{(n)} \triangleq \frac{1}{n}\sum_{i=1}^n
    \Pr\Bigl((S_{0,i},S_{1,i},S_{2,i})
    \neq(\hat S_{0,i},\hat S_{1,i},\hat S_{2,i})\Bigr).
\end{IEEEeqnarray}

A rate pair $(R_1,R_2)\in\mathbb{R}_{\geq 0}^2$ is said to be
\emph{achievable} if, for some $\epsilon_0 > 0$, there exists a sequence
of rate-$(R_{1}+\epsilon_0,R_{2} + \epsilon_0)$ encoders and decoders
(indexed by the blocklength $n$) such that both $P^{(n)}_\text{e}$ and
$\beta^{(n)}$ tend to zero. The \emph{capacity region} $\set{C}$ is
the closure of the set of all achievable rate pairs.

The added $\epsilon_0$ in the definition allows us to state and prove
the main theorem without the 
intricacies of the zero-rate case.

\begin{remark}
  \label{rem:convex}
  The capacity region is a (possibly empty) convex compact subset of
  $\mathbb{R}^{2}$.
\end{remark}
\begin{proof}
  The set is closed by definition; it is bounded because the alphabets are
  finite; and convexity can be established using a time-sharing argument.
  \end{proof}

\section{Main Result}\label{sec:main}
\newcommand{\CI}{{\mathcal C}^{(I)}}

The main result of this paper is a characterization of the capacity
region. It is derived, under the following assumption:
\begin{assumption}\label{ass:nontri}
  The SD-MAC $W(y|x_{1},x_{2},s_{0},s_{1},s_{2})$ is such that, in the
  absence of a state-reconstruction requirement (i.e., when
  $\beta^{(n)}$ need not tend to zero), there exists a
  positive rate pair $(R_1,R_2)\in\mathbb R_{>0}^2$ that is achievable
  (when the encoders are causally informed).
\end{assumption}

\begin{remark}
  \label{rem:common_state}
  In the ``common state'' case---i.e., when $S_{1}$ and $S_{2}$ are
  deterministic---Assumption~\ref{ass:nontri} is
  equivalent to the assumption that some positive rate pair can be
  achieved over the SD-MAC using Shannon strategies. This equivalence need
  not hold in general.
\end{remark}
\begin{IEEEproof} See
  Appendix~\ref{app:proof_common_state}. \end{IEEEproof}

To describe the capacity region, we need some preliminary definitions.
Let $\mathscr P$ denote the collection of all joint PMFs of the
following form, where $Q$ and $V$ are finite auxiliary random variables:
\begin{IEEEeqnarray}{c}\label{eq:pmf}
    P_Q \, P_{S_0} \, P_{S_1} \, P_{S_2} \, P_{V|Q} \,
    P_{X_1|S_0S_1VQ} \, P_{X_2|S_0S_2VQ} \,
        W_{Y|X_1X_2S_0S_1S_2}.
\end{IEEEeqnarray}
For each $P\in\mathscr P$, let $\set R_{<}(P)$ denote the rate polytope
\begin{IEEEeqnarray}{c}\label{eq:region} \left\{
    (R_1,R_2)\in\mathbb R_{\geq0}^2
    \middle|
    \begin{aligned}
    R_1<&I(X_1,S_1;Y\,|\,X_2,S_0,S_2,V,Q)-H(S_1),\\
    R_2<&I(X_2,S_2;Y\,|\,X_1,S_0,S_1,V,Q)-H(S_2),\\
    R_1+R_2<&I(X_1,X_2,S_1,S_2;Y\,|\,S_0,V,Q)
        -H(S_1)-H(S_2),\\
    R_1+R_2<&I(X_1,X_2,S_0,S_1,S_2;Y\,|\,Q)
        -H(S_0)-H(S_1)-H(S_2)
    \end{aligned}
    \right\}.
\end{IEEEeqnarray}
Finally, define
\begin{IEEEeqnarray}{c}
\CI
\triangleq
\operatorname{cl}\left(
\bigcup_{P\in\mathscr P}\set R_{<}(P)
\right)
\label{eq:single-letter-region}
\end{IEEEeqnarray}
where $\operatorname{cl}(\cdot)$ stands for the closure operator.

\begin{remark}[Cardinality bounds]\label{rem:cardinality}
Without altering $\CI$, we may restrict the cardinalities of the auxiliary alphabets to
\begin{IEEEeqnarray}{rCl}
    |\set{Q}|&\leq&5,\\
    |\set{V}|&\leq&|\set{Y}|+4.
\end{IEEEeqnarray}
\end{remark}
\begin{IEEEproof} See Appendix~\ref{app:proof-cardinality}. \end{IEEEproof}

\begin{theorem}\label{thm:main}
  Under Assumption~\ref{ass:nontri},
  \begin{IEEEeqnarray}{rCl}
    \set{C} = \CI.
  \end{IEEEeqnarray}
\end{theorem}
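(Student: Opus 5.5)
We prove the two inclusions separately: a Fano-based converse for $\set{C}\subseteq\CI$, and a Block-Markov construction for $\CI\subseteq\set{C}$. Assumption~\ref{ass:nontri} will be used only in the achievability part.

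\emph{Converse.} Fix a sequence of codes with $P_{\mathrm e}^{(n)}\to0$ and $\beta^{(n)}\to0$. By Fano's inequality, $H(M_1,M_2|Y^n)\le n\epsilon_n$. For the state we use the standard symbol-wise argument: $H(S_{j,i}|\hat S_{j,i})\le H_{\mathrm b}(p_i)+p_i\log|\set{S}_j|$, where $p_i$ is the per-symbol error probability. Averaging over $i$ and using concavity then gives
\[
\tfrac1n H(S_0^n,S_1^n,S_2^n|Y^n)\le\epsilon_n .
\]
The total-rate bound follows from
\[
n(R_1+R_2)+H(S_0^n,S_1^n,S_2^n)\le I(M_1,M_2,S_0^n,S_1^n,S_2^n;Y^n)+2n\epsilon_n,
\]
which we single-letterize as $\sum_i I(X_{1i},X_{2i},S_{0i},S_{1i},S_{2i};Y_i)$. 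For the other three bounds we condition on the common state. We define the auxiliary $V_i=(S_0^{i-1},S_1^{i-1},S_2^{i-1},Y^{i-1})$, or a suitable subset of it, and take $Q$ uniform on $[1:n]$ with $V=(V_Q,Q)$. The key point is that $X_{1i}$ is a function of $(M_1,S_0^i,S_1^i)$, so it must be absorbed using $V_i$ together with the message.

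This needs care, because the region in \eqref{eq:region} has no message auxiliary. The structure $P_{X_1|S_0S_1VQ}$ suggests that, given the past of all states, the inputs are deterministic functions of $(M_1,\text{past},S_{0i},S_{1i})$. I would therefore take
\[
V_i=(M_1,M_2,S^{i-1})?
\]
This fails the requirement that $V$ be independent of the states only through the messages. The fix I would try is the observation that $(M_1,M_2)$ are independent of $S_i$, as is $S^{i-1}$. Hence $V_i=(M_1,M_2,S_0^{i-1},S_1^{i-1},S_2^{i-1})$ is independent of $S_i$, and given $(V_i,S_{0i},S_{1i})$ the input $X_{1i}$ is determined, and similarly for $X_{2i}$. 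That matches the form \eqref{eq:pmf}, which allows common randomness $V$ (with $P_{X_1|S_0S_1VQ}$ permitted to depend on all of $V$).

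With this choice, the bound on $R_1$ is obtained as follows. We start from $nR_1+H(S_1^n)\le I(M_1,S_1^n;Y^n|M_2,S_0^n,S_2^n)+n\epsilon_n$ and expand with the chain rule. Each term is upper bounded by $I(X_{1i},S_{1i};Y_i|X_{2i},S_{0i},S_{2i},V_i)$, using that $Y_i$ depends on the past only through the current inputs and states. The cross terms involving $Y^{i-1}$ are dropped since conditioning on $V_i$ (which contains all past states and messages, hence determines $Y^{i-1}$ up to channel noise that is independent of $Y_i$) is permitted. The $R_2$ and $R_1+R_2$ bounds (conditioned on $S_0$) follow identically, and the fourth bound uses the unconditioned total as above. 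Finally we let $n\to\infty$ and invoke closure.

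\emph{Achievability.} We use $B$ blocks with backward decoding. In block $b$ both encoders know $S_0$ of block $b-1$, and Encoder~$k$ knows $S_k$ of block $b-1$. Each encoder compresses (losslessly, at rate $H(S_k)$) its private past state into a bin index and sends it together with its fresh message. The common state $S_0$ of block $b-1$ is carried cooperatively through the cloud codeword $V^n$, indexed by a lossless description of $S_{0}$ at rate $H(S_0)$ plus, if useful, parts of the private indices. Superposition of the satellite codewords is done as in Slepian--Wolf; the encoders use Shannon strategies $x_k=f_k(v,s_0,s_k)$ of the current state.

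The decoder proceeds backwards. Once it knows block $b+1$'s cloud and satellites, it knows the states of block $b$ exactly. In block $b$ it then decodes jointly the cloud index and the satellite indices, treating $(V,S_0,S_1,S_2,X_1,X_2,Y)$ as jointly typical. The resulting error-analysis constraints are exactly the four inequalities of \eqref{eq:region}. Here the rates $H(S_j)$ appear because the state indices compete with the messages, and the mutual informations include the states because the known states of block $b$ act as additional channel outputs.

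\emph{Main obstacle.} Starting the backward decoding requires reliably conveying the states of the last block. Assumption~\ref{ass:nontri} supplies a positive-rate code without a state constraint, which we use in a few extra blocks to transmit those final state descriptions. Its cost vanishes as $B\to\infty$.

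I expect the hardest step to be the converse's identification of $V$ so that \eqref{eq:pmf} holds with $V$ independent of the current state. I would first verify the candidate $V_i=(M_1,M_2,S^{i-1})$ term by term.
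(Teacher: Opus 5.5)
There are genuine gaps in both directions of your argument.

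\textbf{Converse.} Your final auxiliary $V_i=(M_1,M_2,S_0^{i-1},S_1^{i-1},S_2^{i-1})$ does satisfy the product form \eqref{eq:pmf}, but it destroys the rate bounds. With this $V_i$, $X_{1,i}$ is a deterministic function of $(V_i,S_{0,i},S_{1,i})$. Moreover, $S_{1,i}$ is independent of $(V_i,S_{0,i},S_{2,i},X_{2,i})$. Hence
\[
I(X_{1,i},S_{1,i};Y_i\,|\,X_{2,i},S_{0,i},S_{2,i},V_i)=I(S_{1,i};Y_i\,|\,X_{2,i},S_{0,i},S_{2,i},V_i)\leq H(S_1),
\]
and your chain would give $R_1\leq\epsilon_n\to0$ for every code sequence with vanishing error probabilities, which is absurd.

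The faulty step is the passage to the single-letter term. There you need $H(Y_i\,|\,Y^{i-1},M_2,S_0^n,S_2^n)\leq H(Y_i\,|\,X_{2,i},S_{0,i},S_{2,i},V_i)$, and putting $M_1$ or $S_1^{i-1}$ into the right-hand conditioning goes the wrong way. Your other candidate, which contains $Y^{i-1}$, fails differently: conditioning on past outputs couples $M_1$ and $M_2$ and breaks the product $P_{X_1|S_0S_1VQ}\,P_{X_2|S_0S_2VQ}$.

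The paper takes $V_i=S_0^{i-1}$ only. This choice has three properties:
\begin{enumerate}
\item[(i)] it is independent of $S_i$;
\item[(ii)] given $(S_i,V_i)$, the inputs are conditionally independent, because $(M_1,S_1^{i-1})$ and $(M_2,S_2^{i-1})$ are independent of each other and of $S_0^n$;
\item[(iii)] it lies inside the conditioning $S_0^n$ of the first three bounds, so $H(Y_i\,|\,Y^{i-1},M_2,S_0^n,S_2^n)\leq H(Y_i\,|\,X_{2,i},S_{0,i},S_{2,i},S_0^{i-1})$ is a legitimate conditioning-reduces-entropy step.
\end{enumerate}
The fourth bound needs no $V$ at all.

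\textbf{Achievability.} Your scheme describes each past state at rate $H(S_k)$ and then decodes block $b$ with its states already known exactly. This cannot reach \eqref{eq:region}. Treating known states as extra outputs gives $I(V,T_1,T_2;Y,S_0,S_1,S_2\,|\,Q)=I(V,T_1,T_2;Y\,|\,S_0,S_1,S_2,Q)$, not $I(X_1,X_2,S_0,S_1,S_2;Y\,|\,Q)$. For example, your scheme needs $R_1+H(S_1)<I(T_1;Y\,|\,V,T_2,S_0,S_1,S_2,Q)$, which falls short of $A_1$ by $I(S_1;Y\,|\,V,T_2,S_0,S_2,Q)$. For the sum rate, since $I(X_1,X_2,S_0,S_1,S_2;Y|Q)-H(S_0)-H(S_1)-H(S_2)=I(X_1,X_2;Y|S_0,S_1,S_2,Q)-H(S_0,S_1,S_2|Y,Q)$, you lose $I(S_0,S_1,S_2;Y|Q)$.

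The missing idea is Slepian--Wolf binning with the block's own output as decoder side information. You bin at rates $H(S_0|Y,Q)$, $H(S_2|V,S_0,Y,Q)$ and $H(S_1|V,T_2,S_0,S_2,Y,Q)$. In block $b$ you then recover the states from the bin indices (delivered by block $b+1$) jointly with $Y^{(b)}$ and the codewords already decoded, as in the paper's order $S_0\to M_V\to S_2\to M_2\to S_1\to M_1$.

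Your claim that joint decoding yields ``exactly the four inequalities'' is also unsupported. The paper reaches the region vertex by vertex:
\begin{enumerate}
\item[(i)] $S_0$-bin bits that the common pipe cannot carry are routed over a private pipe. This is the reverse of your suggestion: private indices cannot enter the cloud, since the other encoder does not know them.
\item[(ii)] It time-shares between the two decoding orders to approach $(C,0)$.
\item[(iii)] It then uses convexity.
\end{enumerate}
Also, the inputs must be $x_k(T_k,S_0,S_k)$, with a satellite $T_k$ obtained from the functional representation lemma, not $f_k(v,s_0,s_k)$. Your use of Assumption~\ref{ass:nontri} for the terminating block matches the paper.
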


\section{Converse}\label{sec:converse}

We first prove the converse. Fix an achievable rate pair $(R_1,R_2)$.
By definition, there exist a constant $\epsilon_0>0$ and a sequence of
rate-$(R_1+\epsilon_0,R_2+\epsilon_0)$ coding schemes such that
$P_{\textnormal e}^{(n)}\to0$ and $\beta^{(n)}\to0$. Define
\begin{IEEEeqnarray}{rCl}
    \delta_{\text e}^{(n)}
    &\triangleq&
    (R_1+R_2+2\epsilon_0)P_\textnormal{e}^{(n)}
    +\frac{1}{n}H_{\text b}\bigl(P_\textnormal{e}^{(n)}\bigr)\\
    \delta_{\beta}^{(n)}
    &\triangleq&
    \beta^{(n)}\log(|\set{S}_0||\set{S}_1||\set{S}_2|-1)
    +H_{\text b}\bigl(\beta^{(n)}\bigr)\\
    \eta_n&\triangleq&\delta_{\text e}^{(n)}+\delta_{\beta}^{(n)}.
\end{IEEEeqnarray}
Since $P_{\textnormal e}^{(n)}$ and $\beta^{(n)}$ tend to zero, so does $\eta_n$.
Applying Fano's inequality \cite[Proposition~2.17]{moser2025advanced}
to the messages, we obtain
\begin{IEEEeqnarray}{rCl}
    H(M_1,M_2\,|\,Y^n)
    &\leq&
    n\,\delta_{\text e}^{(n)}.
    \label{eq:three-fano-messages}
\end{IEEEeqnarray}
Define
\begin{IEEEeqnarray}{rCl}
p_i^{(n)}
&\triangleq&
\Pr\bigl( (S_{0,i},S_{1,i},S_{2,i})\neq(\hat{S}_{0,i},\hat{S}_{1,i},\hat{S}_{2,i})\bigr).
\end{IEEEeqnarray}
Since $(\hat{S}_{0,i},\hat{S}_{1,i},\hat{S}_{2,i})$ are functions of $Y^n$,
\begin{IEEEeqnarray}{rCl}
\IEEEyesnumber
\IEEEyessubnumber*
H(S_{0,i},S_{1,i},S_{2,i}\,|\, Y^n)
&\leq&
H(S_{0,i},S_{1,i},S_{2,i}\,|\,\hat{S}_{0,i},\hat{S}_{1,i},\hat{S}_{2,i})
\\
&\leq&
H_{\text b}\bigl(p_i^{(n)}\bigr)
+p_i^{(n)}
\log\bigl(
|\set{S}_0||\set{S}_1||\set{S}_2|-1
\bigr),
\label{eq:coordinatewise-fano}
\end{IEEEeqnarray}
where the second inequality follows from Fano's inequality.

By the chain rule, the fact that conditioning reduces entropy, and the
concavity of $H_{\text b}(\cdot)$,
\begin{IEEEeqnarray}{rCl}\label{eq:three-fano-states}
\IEEEyesnumber
\IEEEyessubnumber*
    &&H(S_0^n,S_1^n,S_2^n\,|\,Y^n)
    \nonumber\\
    &=&
    \sum_{i=1}^n
    H(S_{0,i},S_{1,i},S_{2,i}\,|\,
      S_0^{i-1},S_1^{i-1},S_2^{i-1},Y^n)
    \\
    &\leq&
    \sum_{i=1}^n
    H(S_{0,i},S_{1,i},S_{2,i}\,|\,Y^n)
    \\
    &\leq&
    \sum_{i=1}^n H_{\text b}\bigl(p_i^{(n)}\bigr)
    +\sum_{i=1}^n p_i^{(n)}\log\bigl(|\set{S}_0||\set{S}_1||\set{S}_2|-1\bigr)
    \\
    &\leq&
    nH_{\text b}\bigl(\beta^{(n)}\bigr)
    +n\beta^{(n)}  \log\bigl(|\set{S}_0||\set{S}_1||\set{S}_2|-1\bigr)
    \\
    &=&
    n\delta_{\beta}^{(n)}.
\end{IEEEeqnarray}

From \eqref{eq:three-fano-messages} and
\eqref{eq:three-fano-states}, we obtain
\begin{IEEEeqnarray}{rCl}
    \IEEEyesnumber
\IEEEyessubnumber*&&H(M_1,M_2,S_0^n,S_1^n,S_2^n\,|\,Y^n)
    \nonumber\\
    &\leq&
    H(M_1,M_2\,|\,Y^n)
    +H(S_0^n,S_1^n,S_2^n\,|\,Y^n)
    \\
    &\leq&
    n\bigl(\delta_{\text e}^{(n)}+\delta_{\beta}^{(n)}\bigr)
    =n\eta_n.
    \label{eq:three-fano-joint}
\end{IEEEeqnarray}

Consequently, the conditional entropies obtained by revealing
$M_1$, $M_2$, or any subset of the state sequences given $Y^n$ are also at most
$n\,\eta_n$.

For each $i\in[1:n]$, define the common-state history
\begin{IEEEeqnarray}{c}\label{eq:three-history}
    V_i\triangleq S_0^{i-1}.
\end{IEEEeqnarray}
Since the state sequence is IID, $V_i$ is independent of
$(S_{0,i},S_{1,i},S_{2,i})$. Conditioned on
$(S_{0,i},S_{1,i},S_{2,i},V_i)$, the inputs $X_{1,i}$ and $X_{2,i}$ are independent, because the
private state histories and messages are independent. Hence, the factorization
\begin{IEEEeqnarray}{c}\label{eq:three-factorization-n}
    P_{X_{1,i}X_{2,i}|S_{0,i}S_{1,i}S_{2,i}V_i}
    =P_{X_{1,i}|S_{0,i}S_{1,i}V_i} \,
     P_{X_{2,i}|S_{0,i}S_{2,i}V_i}.
\end{IEEEeqnarray}

Since the message $M_{1}$ that Encoder~1 wishes to convey and its
private state $S_{1}^{n}$ are independent and as a pair are independent of $(M_2,S_0^n,S_2^n)$
\begin{IEEEeqnarray}{rCl}
\IEEEyesnumber
\IEEEyessubnumber*
    &&n\bigl(R_1+\epsilon_0+H(S_1)\bigr) \nonumber\\
    =&&H(M_1,S_1^n\,|\,M_2,S_0^n,S_2^n)\label{eq:three-r1a}\\
    \leq&&I(M_1,S_1^n;Y^n\,|\,M_2,S_0^n,S_2^n)+n\,\eta_n
        \label{eq:three-r1b}\\
    =&&\sum_{i=1}^n\Bigl[
        H(Y_i\,|\,Y^{i-1},M_2,S_0^n,S_2^n)
        -H(Y_i\,|\,Y^{i-1},M_1,M_2,S_0^n,S_1^n,S_2^n)
        \Bigr]+n\,\eta_n\label{eq:three-r1c}\\
    \leq&&\sum_{i=1}^n\Bigl[
        H(Y_i\,|\,X_{2,i},S_{0,i},S_{2,i},V_i)
        -H(Y_i\,|\,X_{1,i},X_{2,i},S_{0,i},S_{1,i},S_{2,i},V_i)
        \Bigr]+n\,\eta_n\label{eq:three-r1d}\\
    =&&\sum_{i=1}^n
        I(X_{1,i},S_{1,i};Y_i\,|\,
        X_{2,i},S_{0,i},S_{2,i},V_i)+n\,\eta_n,
        \label{eq:three-r1e}
\end{IEEEeqnarray}
where~\eqref{eq:three-r1d} holds because $X_{2,i}$ is a function of
$(M_2,S_0^i,S_2^i)$ and because the channel is memoryless.

By symmetry,
\begin{IEEEeqnarray}{rCl}\label{eq:three-r2}
    n\bigl(R_2+\epsilon_0+H(S_2)\bigr)&\leq&
    \sum_{i=1}^n I(X_{2,i},S_{2,i};Y_i\,|\,
    X_{1,i},S_{0,i},S_{1,i},V_i)+n\,\eta_n.
\end{IEEEeqnarray}

Conditioning on the common state sequence yields the private-state sum bound
\begin{IEEEeqnarray}{rCl}
\IEEEyesnumber
\IEEEyessubnumber*
    &&n\bigl(R_1+R_2+2\epsilon_0+H(S_1)+H(S_2)\bigr)\nonumber\\
    =&&H(M_1,M_2,S_1^n,S_2^n\,|\,S_0^n)
        \label{eq:three-private-suma}\\
    \leq&&I(M_1,M_2,S_1^n,S_2^n;Y^n\,|\,S_0^n)+n\,\eta_n
        \label{eq:three-private-sumb}\\
    =&&\sum_{i=1}^n H(Y_i\,|\,Y^{i-1},
    S_0^n)-H(Y_i\,|\, Y^{i-1},S_0^n,M_1,M_2,S_1^n,S_2^n)+n\,\eta_n
    \label{eq:three-private-sumc}\\
    \leq&&\sum_{i=1}^n
    I(X_{1,i},X_{2,i},S_{1,i},S_{2,i};Y_i\,|\,S_{0,i},V_i)
    +n\,\eta_n.\label{eq:three-private-sumd}
\end{IEEEeqnarray}

Finally, accounting for all three states:
\begin{IEEEeqnarray}{rCl}
\IEEEyesnumber
\IEEEyessubnumber*
    &&n\bigl(R_1+R_2+2\epsilon_0+H(S_0)+H(S_1)+H(S_2)\bigr)\nonumber\\
    =&&H(M_1,M_2,S_0^n,S_1^n,S_2^n)\label{eq:three-total-suma}\\
    \leq&&I(M_1,M_2,S_0^n,S_1^n,S_2^n;Y^n)+n\,\eta_n
        \label{eq:three-total-sumb}\\
    =&&H(Y^n)-H(Y^n\,|\,M_1,M_2,S_0^n,S_1^n,S_2^n)+n\,\eta_n
        \label{eq:three-total-sumc}\\
    \leq&&\sum_{i=1}^n\Bigl[
        H(Y_i)-H(Y_i\,|\,X_{1,i},X_{2,i},S_{0,i},S_{1,i},S_{2,i})
        \Bigr]+n\,\eta_n\label{eq:three-total-sumd}\\
    =&&\sum_{i=1}^n
        I(X_{1,i},X_{2,i},S_{0,i},S_{1,i},S_{2,i};Y_i)+n\,\eta_n.
        \label{eq:three-total-sume}
\end{IEEEeqnarray}

To single-letterize, let $J$ be uniform on $[1:n]$ and independent of
all the other random
variables, and define
\begin{IEEEeqnarray}{c}
    Q\triangleq J,\qquad V\triangleq V_J,\qquad
    (S_0,S_1,S_2,X_1,X_2,Y)
    \triangleq(S_{0,J},S_{1,J},S_{2,J},X_{1,J},X_{2,J},Y_J).
\end{IEEEeqnarray}
For each $n$, denote the induced joint PMF by $P^{(n)}$. It factorizes as
\begin{IEEEeqnarray}{c}\label{eq:three-converse-pmf}
    P_Q \, P_{S_0} \, P_{S_1} \, P_{S_2} \, P_{V|Q} \,
    P_{X_1|S_0S_1VQ} \, P_{X_2|S_0S_2VQ} \,
    W_{Y|X_1X_2S_0S_1S_2}.
\end{IEEEeqnarray}
Thus, $P^{(n)}\in\mathscr P$.

Dividing \eqref{eq:three-r1e},
\eqref{eq:three-r2}, \eqref{eq:three-private-sumd}, and
\eqref{eq:three-total-sume} by $n$ and subtracting the
entropy terms from both sides yields
\begin{IEEEeqnarray}{rCl}
\IEEEyesnumber
\IEEEyessubnumber*
    R_1&\leq&I(X_1,S_1;Y\,|\,X_2,S_0,S_2,V,Q)-H(S_1)-\epsilon_0+\eta_n\\
    R_2&\leq&I(X_2,S_2;Y\,|\,X_1,S_0,S_1,V,Q)-H(S_2)-\epsilon_0+\eta_n\\
    R_1+R_2&\leq&I(X_1,X_2,S_1,S_2;Y\,|\,S_0,V,Q)
        -H(S_1)-H(S_2)-2\epsilon_0+\eta_n\\
    R_1+R_2&\leq&I(X_1,X_2,S_0,S_1,S_2;Y\,|\,Q)
        -H(S_0)-H(S_1)-H(S_2)-2\epsilon_0+\eta_n,
\end{IEEEeqnarray}
where the quantities on the right-hand sides are evaluated under
$P^{(n)}$.

Since $\eta_n\to0$, we can choose $n$ sufficiently large so that
$\eta_n<\epsilon_0$. All four strict inequalities in \eqref{eq:region} then hold under
$P^{(n)}$, and hence
\begin{IEEEeqnarray}{c}
  \label{eq:amos_contained}
(R_1,R_2)\in\set R_{<}\bigl(P^{(n)}\bigr)
\subseteq
\bigcup_{P\in\mathscr P}\set R_{<}(P).
\end{IEEEeqnarray}
Thus, every achievable rate pair belongs to the above union, and
$\set{C}$---the closure of the set of all achievable rates---is thus
contained in the closure of the RHS of \eqref{eq:amos_contained},
i.e., in $\CI$.

\section{Achievability}
\label{sec:seq-achievability}
Given some fixed joint PMF of the form
\begin{IEEEeqnarray}{c}
 P_Q \, P_{S_0} \, P_{S_1} \, P_{S_2} \, P_{V|Q} \,
 P_{X_1|S_0S_1VQ} \, P_{X_2|S_0S_2VQ} \,
 P_{Y|X_1X_2S_0S_1S_2}
 \label{eq:seq-pmf}
\end{IEEEeqnarray}
we need to establish the achievability of the polytope $\set R_{<}(P)$,
which \emph{we henceforth assume is nonempty}. To
that end, we first express it as the
marginal distribution of a distribution involving Shannon strategies.
\subsection{Introducing Shannon Strategies}
By the Functional Representation Lemma
\cite[Appendix~B]{el2011network}, there exist finite-alphabet
random variables $T_1,T_2$ of some conditional PMFs $P_{T_1|VQ}$ and
$P_{T_2|VQ}$, and some deterministic functions $x_1(\cdot)$,
$x_2(\cdot)$ such that
\begin{IEEEeqnarray}{c}
 X_1 = x_1(T_1,S_0,S_1),\qquad
 X_2 = x_2(T_2,S_0,S_2).
 \label{eq:seq-strategies}
\end{IEEEeqnarray}
More specifically, the joint PMF of $(T_{1},T_{2})$ and the
other variables has the form 
\begin{IEEEeqnarray}{rCl}
P_{QS_0S_1S_2VT_1T_2X_1X_2Y}
&=&
P_Q\,P_{S_0}\,P_{S_1}\,P_{S_2}\,P_{V|Q}\,
P_{T_1|VQ}\,P_{T_2|VQ}
\nonumber\\
\cdot
\mathbb{I}\bigl\{X_1=&x_1&(T_1,S_0,S_1)\bigr\} \,
\mathbb{I}\bigl\{X_2=x_2(T_2,S_0,S_2)\bigr\} \,
P_{Y|X_1X_2S_0S_1S_2}.
\label{eq:seq-strategy-pmf}
\end{IEEEeqnarray}
In particular, 
the following are Markov chains:
\begin{IEEEeqnarray}{rCl}
\IEEEyesnumber
\IEEEyessubnumber*
(S_0,S_1,S_2)
&\markov&
(V,Q)\markov(T_1,T_2)\label{eq:strategy_markova}\\
T_1
&\markov&
(V,Q)\markov T_2\label{eq:strategy_markovb}\\
(V,Q,T_1,T_2)
&\markov&
(X_1,X_2,S_0,S_1,S_2)\markov Y\label{eq:strategy_markovc}\\
T_1
&\markov&
(V,Q,X_1,S_0,S_1)\markov Y\label{eq:strategy_markovd}\\
T_2
&\markov&
(V,Q,X_2,S_0,S_2)\markov Y\label{eq:strategy_markove}\\
(T_1,S_1)&\markov& (V,Q,S_0) \markov (T_2,S_2).\label{eq:strategy_markovf}
\end{IEEEeqnarray}

As we next argue, because $X_k$ is a deterministic function of
$(T_k,S_0,S_k)$, for each $k \in \{1,2\}$, and because the channel output
depends on $T_1,T_2$ only through $X_1,X_2$, 
\begin{IEEEeqnarray}{rCl} \label{eq:seq-strategy-identities}
\IEEEyesnumber
\IEEEyessubnumber*
 I(T_1,S_1;Y\,|\,T_2,S_0,S_2,V,Q)
 &=&I(X_1,S_1;Y\,|\,X_2,S_0,S_2,V,Q),\\
 I(T_2,S_2;Y\,|\,T_1,S_0,S_1,V,Q)
 &=&I(X_2,S_2;Y\,|\,X_1,S_0,S_1,V,Q),\\
 I(T_1,T_2,S_1,S_2;Y\,|\,S_0,V,Q)
 &=&I(X_1,X_2,S_1,S_2;Y\,|\,S_0,V,Q),\\
 I(V,T_1,T_2,S_0,S_1,S_2;Y\,|\,Q)
 &=&I(X_1,X_2,S_0,S_1,S_2;Y\,|\,Q).
\end{IEEEeqnarray}

Indeed, the first identity in \eqref{eq:seq-strategy-identities} holds because
\begin{IEEEeqnarray}{rCl}
\IEEEyesnumber
\IEEEyessubnumber*
&&I(T_1,S_1;Y\,|\,T_2,S_0,S_2,V,Q)\nonumber\\
&\stackrel{(a)}{=}&
H(Y\,|\,T_2,X_2,S_0,S_2,V,Q)
-H(Y\,|\,T_1,X_1,T_2,X_2,S_0,S_1,S_2,V,Q)\\
&\stackrel{(b)}{=}&
H(Y\,|\,X_2,S_0,S_2,V,Q)
-H(Y\,|\,X_1,X_2,S_0,S_1,S_2,V,Q)\\
&=&
I(X_1,S_1;Y\,|\,X_2,S_0,S_2,V,Q).
\label{eq:seq-strategy-identity-first-proof}
\end{IEEEeqnarray}
Here, $(a)$ follows from the fact that $X_1$ and $X_2$ are deterministic
functions of $(T_1,S_0,S_1)$ and $(T_2,S_0,S_2)$, respectively, and $(b)$ follows from the Markov chains \eqref{eq:strategy_markovc} and \eqref{eq:strategy_markove}. 
The remaining identities in
\eqref{eq:seq-strategy-identities} follow analogously.

We conclude that $\set R_{<}(P)$ can be expressed as
\begin{IEEEeqnarray}{c}\label{eq:Tpolytope} \left\{
    (R_1,R_2)\in\mathbb R_{\geq0}^2
    \middle|
    \begin{aligned}
    R_1<&I(T_1,S_1;Y\,|\,T_2,S_0,S_2,V,Q)-H(S_1),\\
    R_2<&I(T_2,S_2;Y\,|\,T_1,S_0,S_1,V,Q)-H(S_2),\\
    R_1+R_2<&I(T_1,T_2,S_1,S_2;Y\,|\,S_0,V,Q)
        -H(S_1)-H(S_2),\\
    R_1+R_2<&I(T_1,T_2,S_0,S_1,S_2;Y\,|\,Q)
        -H(S_0)-H(S_1)-H(S_2)
    \end{aligned}
    \right\}.
\end{IEEEeqnarray}

\subsection{The Polytope and its Vertices}
\label{sec:seq-net-rates}

For a fixed PMF $P$ as in \eqref{eq:seq-strategy-pmf}, define
\begin{IEEEeqnarray}{rCl} \label{eq:seq-region-parameters}
 \IEEEyesnumber
 \IEEEyessubnumber*
A_1&\triangleq&I(T_1,S_1;Y\,|\,V,T_2,S_0,S_2,Q)-H(S_1),\\
 A_2&\triangleq&I(T_2,S_2;Y\,|\,V,T_1,S_0,S_1,Q)-H(S_2),\\
 C_{\rm p}
 &\triangleq&I(T_1,T_2,S_1,S_2;Y\,|\,V,S_0,Q)
              -H(S_1)-H(S_2),\\
 C_{\rm t}
 &\triangleq&I(V,T_1,T_2,S_0,S_1,S_2;Y\,|\,Q)
              -H(S_0)-H(S_1)-H(S_2),\\
 C&\triangleq&\min\{C_{\rm p},C_{\rm t}\}.
\end{IEEEeqnarray}
With these definitions, we can express the nonempty polytope $\set R_<(P)$ of~\eqref{eq:Tpolytope}
as comprising the rate pairs satisfying
\begin{IEEEeqnarray}{c}
 0 \leq R_1< A_1,\qquad 0 \leq R_2< A_2,\qquad R_1+R_2< C.
 \label{eq:seq-fixed-region}
\end{IEEEeqnarray}
We note that
\begin{IEEEeqnarray}{rCl}
  \label{eq:printer_works}
  C & \leq & A_1+A_2.
\end{IEEEeqnarray}
Indeed, 
the stronger inequality $C_{\rm p} \leq A_1+A_2$ holds because
\begin{IEEEeqnarray}{rCl}
 A_1+A_2-C_{\rm p}
 &=&I(T_1,S_1;T_2,S_2\,|\,Y,V,S_0,Q), 
 \label{eq:seq-submodularity}
\end{IEEEeqnarray}
as can be verified using the chain rule and the Markov condition~\eqref{eq:strategy_markovf}.

Let
$\mathscr P_+\triangleq
\{P\in\mathscr P\,:\, \set R_{<}(P)\neq\emptyset\}$
denote the subfamily of PMFs that induce nonempty strict rate
polytopes. From~\eqref{eq:printer_works}, we conclude that, for every
$P\in\mathscr P_+$, the polytope $\set R_{<}(P)$ has one of the four
shapes depicted in Fig.~\ref{fig:seq-ach-reg}. Its closure is the
polytope $\set R_{\leq}(P)$ obtained by replacing the strict
inequalities in~\eqref{eq:seq-fixed-region} with weak inequalities.

The PMFs in $\mathscr P\setminus\mathscr P_+$ contribute only empty
sets to~\eqref{eq:single-letter-region}. Moreover, replacing each set
in a family by its closure does not change the closure of their union.
Consequently,
\begin{IEEEeqnarray}{rCl}
\CI
&=&
\operatorname{cl}\left(
\bigcup_{P\in\mathscr P_+}\set R_{\leq}(P)
\right).
\label{eq:NEW_single-letter-region}
\end{IEEEeqnarray}

\begin{figure}
\begin{tikzpicture}[
    scale=3,
    axis/.style={-latex, thick},
    bound/.style={black, thick},
    fillarea/.style={fill=black!20, opacity=0.6},
    dashline/.style={dashed, gray, thin},
    lbl/.style={font=\small},
    dominant/.style={black, ultra thick},
]
\matrix (m) [matrix of nodes, column sep=1cm, row sep=1cm] {
    \begin{tikzpicture}
        \draw[axis] (-0.5,0) -- (3,0) node[right] {$R_1$};
        \draw[axis] (0,-0.5) -- (0,3) node[above] {$R_2$};
        \coordinate (O) at (0,0);
        \coordinate (C1) at (2,0.5);
        \coordinate (C2) at (0.7,1.8);
        \fill[fillarea] (O) -- (2,0) -- (C1) -- (C2) -- (0,1.8) -- cycle;
        \draw[bound] (2,0) -- (C1);
        \draw[bound] (0,1.8) -- (C2);
        \draw[dominant] (C1) -- (C2);
        \draw[dashline] (2,0.5) -- (2,2.5) node[above, lbl] {$A_1$};
        \draw[dashline] (0.7,1.8) -- (2.5,1.8) node[right, lbl] {$A_2$};
        \draw[dashline] (2.5,0) node[below, lbl] {$C$} -- (0,2.5) node[left, lbl] {$C$};
        \draw[draw=none, fill=blue, opacity=0.2] (C1) circle (0.2);
        \draw[draw=none, fill=black] (C2) circle (0.1);
        \node[below left] at (O) {$0$};
        \node[above, font=\small\bfseries] at (1.2,-1.5)
            {Subfigure 2.1: $C\geq A_1$, $C\geq A_2$};
    \end{tikzpicture}
    &
    \begin{tikzpicture}
        \draw[axis] (-0.5,0) -- (3,0) node[right] {$R_1$};
        \draw[axis] (0,-0.5) -- (0,3) node[above] {$R_2$};
        \coordinate (O) at (0,0);
        \coordinate (A) at (1.2,0);
        \coordinate (C1) at (1.2,0.6);
        \coordinate (C2) at (0,1.8);
        \fill[fillarea] (O) -- (A) -- (C1) -- (C2) -- cycle;
        \draw[bound] (A) -- (C1);
        \draw[dominant] (C1) -- (C2);
        \draw[dashline] (1.2,0.6) -- (1.2,2.5) node[above, lbl] {$A_1$};
        \draw[dashline] (-1,2.2) node[left, lbl] {$A_2$} -- (2,2.2);
        \draw[dashline] (1.8,0) node[below, lbl] {$C$} -- (-0.8,2.6) node[left, lbl] {$C$};
        \draw[draw=none, fill=blue, opacity=0.2] (C1) circle (0.2);
        \draw[draw=none, fill=black] (C2) circle (0.1);
        \node[below left] at (O) {$0$};
        \node[above, font=\small\bfseries] at (1.2,-1.5)
            {Subfigure 2.2: $C\geq A_1$, $C<A_2$};
    \end{tikzpicture}
    \\
    \begin{tikzpicture}
        \draw[axis] (-0.5,0) -- (3,0) node[right] {$R_1$};
        \draw[axis] (0,-0.5) -- (0,3) node[above] {$R_2$};
        \coordinate (O) at (0,0);
        \coordinate (A) at (1.8,0);
        \coordinate (C1) at (0.6,1.2);
        \coordinate (C2) at (0,1.2);
        \fill[fillarea] (O) -- (A) -- (C1) -- (C2) -- cycle;
        \draw[bound] (C2) -- (C1);
        \draw[dominant] (C1) -- (A);
        \draw[dashline] (2.2,-0.8) -- (2.2,1.5) node[above, lbl] {$A_1$};
        \draw[dashline] (0.6,1.2) -- (2.7,1.2) node[right, lbl] {$A_2$};
        \draw[dashline] (2.3,-0.5) node[right, lbl] {$C$} -- (0,1.8) node[left, lbl] {$C$};
        \draw[draw=none, fill=black] (C1) circle (0.1);
        \draw[draw=none, fill=red, opacity=0.2] (A) circle (0.2);
        \node[below left] at (O) {$0$};
        \node[above, font=\small\bfseries] at (1.2,-1.5)
            {Subfigure 2.3: $C<A_1$, $C\geq A_2$};
    \end{tikzpicture}
    &
    \begin{tikzpicture}
        \draw[axis] (-0.5,0) -- (3,0) node[right] {$R_1$};
        \draw[axis] (0,-0.5) -- (0,3) node[above] {$R_2$};
        \coordinate (O) at (0,0);
        \coordinate (A) at (1.5,0);
        \coordinate (B) at (0,1.5);
        \fill[fillarea] (O) -- (A) -- (B) -- cycle;
        \draw[dominant] (A) -- (B);
        \draw[dashline] (2.2,-0.8) node[right, lbl] {$A_1$} -- (2.2,2.5);
        \draw[dashline] (-0.8,2.2) node[left, lbl] {$A_2$} -- (2.5,2.2);
        \draw[dashline] (2.3,-0.8) -- (-0.8,2.3) node[above, lbl] {$C$};
        \draw[draw=none, fill=red, opacity=0.2] (A) circle (0.2);
        \draw[draw=none, fill=black] (B) circle (0.1);
        \node[below left] at (O) {$0$};
        \node[above, font=\small\bfseries] at (1.2,-1.5)
            {Subfigure 2.4: $C<A_1$, $C<A_2$};
    \end{tikzpicture}
    \\
};
\end{tikzpicture}
\centering
\captionsetup{justification=centering}
\caption{The four possible shapes of a nonempty polytope $\set
  R_{\leq}(P)$. By symmetry, to establish that the endpoints of the dominant face
  are all in $\set{C}$, it suffices to consider the colored vertices.
  The coding scheme for the blue vertices is described in
  Section~\ref{sec:blue_achie}, and for the red ones in
  Section~\ref{sec:red_achie}.}
\label{fig:seq-ach-reg}
\end{figure}

Let $\mathbf v^{(1)}$ and $\mathbf v^{(2)}$ denote the two endpoints
of the dominant face of $\set R_{\leq}(P)$. We will show that both
are in~$\set{C}$ by establishing that every neighborhood of each of
them contains an achievable rate pair. The convexity of $\set{C}$ will
then establish that the dominant face is in~$\set{C}$, from which it
will follow that the entire polytope is in~ $\set{C}$.

By symmetry, it suffices to consider the following two types of endpoints:
\begin{enumerate}
 \item the blue vertices $(A_1,C-A_1)$ in Subfigures \ref{fig:seq-ach-reg}.1 and \ref{fig:seq-ach-reg}.2  when $C> A_1$;
 \item the red vertices $(C,0)$ in Subfigures \ref{fig:seq-ach-reg}.3 and \ref{fig:seq-ach-reg}.4  when $C\leq A_1$.
\end{enumerate}
The coding scheme for the blue vertices is described in
  Section~\ref{sec:blue_achie}, and for the red ones in
  Section~\ref{sec:red_achie}.
  
\subsection{An Overview of the Block-Markov Scheme}

We next turn to our proposed Block-Markov scheme with backward
decoding. In each length-$l$ block we draw random bins for the
different (typical) state sequences and use superposition coding. We
draw $2^{lR_0^{\mathrm b}}$ ``$S_{0}$-bins'' that will be used in the
next block to describe the $S_{0}$ state sequence prevailing during
the current block; $2^{lR_1^{\mathrm b}}$ ``$S_{1}$-bins'' that
will be used to describe the $S_{1}$ state sequence; and
$2^{lR_2^{\mathrm b}}$ ``$S_{2}$ bins'' for the $S_{2}$-sequence. Each
of the $2^{lR_0^{\mathrm b}}$ $S_{0}$-bins is indexed by a ``bin
index'' comprising $l R_0^{\mathrm b}$ bits. The index of the
$S_{0}$-bin that contains the $S_{0}$-sequence prevailing in the
current block will be conveyed to the receiver in the next
block. Likewise for the $S_{1}$-bins and the $S_{2}$-bins.

The superposition coding is used in order to create three reliable
``bit pipes.'' One bit pipe---``Encoder~1's bit pipe''---is used by
Encoder~1 to send bits it is cognizant of (e.g., fresh data, the
bits that index the bin from the previous block that
contained the previous block's $S_{1}$-sequence, and a subset of the
bits that index the bin from the previous block that contained the
previous block's $S_{0}$-sequence); another bit pipe---``Encoder~2's
bit pipe''---is similarly used by Encoder~2; and yet another bit
pipe---the ``common pipe''---is to convey bits that are known to both
encoders (e.g., all or a subset of the bits that index the bin
containing the $S_{0}$-sequence of the previous block). If the common
pipe cannot support the transmission of all the $l R_0^{\mathrm b}$
bits that index the $S_{0}$-bin, then it only conveys a subset of them
and the rest are transmitted using Encoder~1's bit pipe and/or
Encoder~2's bit pipe. Which bit pipe carries the extra bits is
determined by the vertex under consideration.

The
transmissions in future blocks and the backward decoding guarantee
that, when faced with the task of decoding a given block, the decoder
is cognizant of the indices of the bins containing the current block's
state sequences.

%
To simplify the exposition, we describe the transmission in a given
block without specifying the nature of the bits transmitted over the
bit pipes. They are merely viewed as IID random bits. Their nature
depends on the rate pair we need to convey: different rate pairs in
the capacity region will require different utilizations of the bit
pipes.  Moreover, when describing the decoding, we assume that the
indices of the bins containing the current block's state sequences 
are known to the receiver (thanks to future transmissions and the
backward decoding).
  

\subsection{A Single Block Favoring Encoder~1}
\label{sec:seq-oneblock}
We denote the length of the block $l$, the common message (comprising
the common bits) $M_{V}$, and the private messages $M_1$ and
$M_{2}$. The latter three are of respective rates $R_V^{\text{c}}$,
$R_1^{\text{p}}$, and $R_2^{\text{p}}$.  The state sequences
prevailing during the current block are denoted $s_{0}^{l}$,
$s_{1}^{l}$, and $s_{2}^{l}$. Those---or more precisely, those that
are typical---will be distributed among $2^{lR_0^{\mathrm b}}$,
$2^{lR_1^{\mathrm b}}$, and $2^{lR_2^{\mathrm b}}$ bins respectively;
here $R_0^{\mathrm b},R_1^{\mathrm b},R_2^{\mathrm b}$ are the
respective binning rates. 
%


Let
\begin{IEEEeqnarray}{c}
 \delta_l\triangleq l^{-1/4}
 \label{eq:seq-delta}
\end{IEEEeqnarray}
be the typicality parameter, so $\delta_l\downarrow0$ and
$l\,\delta_l^2=\sqrt l$.
We shall use the finite-alphabet strong-typicality bounds in
Corollary~4.8 of \cite{moser2025advanced}. Define
\begin{IEEEeqnarray}{rCl}
 p_{\min}
 \triangleq
 \min\Bigl\{
 P_{QVT_1T_2S_0S_1S_2Y}(q,v,t_1,t_2,s_0,s_1,s_2,y):\nonumber\\
 P_{QVT_1T_2S_0S_1S_2Y}(q,v,t_1,t_2,s_0,s_1,s_2,y)>0
 \Bigr\}
 \label{eq:nr-pmin}
\end{IEEEeqnarray}
and
\begin{IEEEeqnarray}{c}
 \tau_l
 \triangleq
 -2\delta_l\log p_{\min}.
 \label{eq:nr-tau-definition}
\end{IEEEeqnarray}

Since $\delta_l\downarrow0$, we have $\tau_l\to0$. The factor two is chosen so that $\tau_l$ uniformly bounds the
penalties in all conditional type-class and packing bounds
used below. 

\paragraph*{Random binning and codebooks.}
For each $k\in\{0,1,2\}$, assign every typical sequence $s_k^l$
uniformly and independently to one of $2^{l\,R_k^{\text{b}}}$ bins,
and assign atypical sequences an arbitrary default bin. Set
$L_k(s_k^l)$ to be the bin index containing $s_{k}^{l}$.

Independently of that, generate a sequence $Q^l$ IID $\sim P_Q$ and
reveal it to both encoders and to the decoder.
For every possible common message $m_V\in[1:2^{l\,R_V^{\text{c}}}]$, generate
\begin{IEEEeqnarray}{c}
 V^l(m_V)\sim\prod_{i=1}^lP_{V|Q}(\cdot\,|\,Q_i).
 \label{eq:seq-cloud-codebook}
\end{IEEEeqnarray}
For every $(m_V,m_k)$, where
$m_k\in[1:2^{l\,R_k^{\text{p}}}]$, generate conditionally and independently
\begin{IEEEeqnarray}{c}
 T_k^l(m_V,m_k)
 \sim\prod_{i=1}^lP_{T_k|VQ}
 \bigl(\cdot\,|\,V_i(m_V),Q_i\bigr),\qquad k\in\{1,2\}.
 \label{eq:seq-satellite-codebooks}
\end{IEEEeqnarray}
The Time-$i$ symbol produced by Encoder~$k$ is
\begin{IEEEeqnarray}{c}
 X_{k,i}=x_k \bigl( T_{k,i}(M_V,M_k),S_{0,i},S_{k,i}\bigr), \quad k \in \{1,2\}.
 \label{eq:seq-symbol-encoder}
\end{IEEEeqnarray}

\paragraph*{Successive decoding}
The decoder bases its decision on the indices $\ell_0 = L_0(s_0^l),\ell_1 = L_1(s_1^l),\ell_2 = L_2(s_2^l)$
of the bins containing the state sequences and on the observation
$Y^l$.  (Recall that in the Block-Markov scheme we shall employ, these
indices will be available via the backward decoding.) It uses the
following order to decode
\begin{IEEEeqnarray}{c}
 S_0^l\longrightarrow M_V\longrightarrow S_2^l\longrightarrow M_2
 \longrightarrow S_1^l\longrightarrow M_1.
 \label{eq:seq-decoding-order}
\end{IEEEeqnarray}
More precisely, it performs the following six tests.
\begin{enumerate}
\item Among all the state sequences $\hat s_0^l$ in Bin~$\ell_{0}$, it
  finds the unique $\hat s_0^l$ with $L_0(\hat s_0^l)=\ell_0$ such
  that $(Q^l,\hat s_0^l,Y^l)$ is jointly $\delta_l$-typical.
 \item Given $\hat s_0^l$, it finds the unique $\hat m_V$ such that
 $(Q^l,V^l(\hat m_V),\hat s_0^l,Y^l)$ is jointly
 $\delta_l$-typical.
 \item Given $(\hat s_0^l,\hat m_V)$, it finds the unique $\hat s_2^l$
 with $L_2(\hat s_2^l)=\ell_2$ such that
 $(Q^l,V^l(\hat m_V),\hat s_0^l,\hat s_2^l,Y^l)$ is jointly
 $\delta_l$-typical.
 \item Given $(\hat s_0^l,\hat s_2^l,\hat m_V)$, it finds the unique
 $\hat m_2$ such that
 \begin{IEEEeqnarray}{c}
  \bigl(Q^l,V^l(\hat m_V),T_2^l(\hat m_V,\hat m_2),
  \hat s_0^l,\hat s_2^l,Y^l\bigr)
  \in\mathcal T_{\delta_l}^{(l)}.
  \label{eq:seq-decoder-m2}
 \end{IEEEeqnarray}
 \item Given all previously recovered objects, it finds the unique
 $\hat s_1^l$ with $L_1(\hat s_1^l)=\ell_1$ such that
 \begin{IEEEeqnarray}{c}
  \bigl(Q^l,V^l(\hat m_V),T_2^l(\hat m_V,\hat m_2),
  \hat s_0^l,\hat s_1^l,\hat s_2^l,Y^l\bigr)
  \in\mathcal T_{\delta_l}^{(l)}.
  \label{eq:seq-decoder-s1}
 \end{IEEEeqnarray}
 \item Finally, it finds the unique $\hat m_1$ such that
 \begin{IEEEeqnarray}{c}
  \bigl(Q^l,V^l(\hat m_V),T_1^l(\hat m_V,\hat m_1),
  T_2^l(\hat m_V,\hat m_2),
  \hat s_0^l,\hat s_1^l,\hat s_2^l,Y^l\bigr)
  \in\mathcal T_{\delta_l}^{(l)}.
  \label{eq:seq-decoder-m1}
 \end{IEEEeqnarray}
\end{enumerate}
An error is declared whenever a required index is absent or nonunique. Let $P_{\mathrm e}^{(l)}$ denote the probability that at least one of
the six decisions is incorrect.

\begin{lemma}
\label{lem:seq-oneblock}
For every sufficiently large $l$, there is a deterministic single-block code
whose error probability vanishes at least as fast as $2^{-\Omega(\sqrt l)}$, provided
  \begin{IEEEeqnarray}{rCl}\label{block:lemma1}
\IEEEyesnumber
\IEEEyessubnumber*
 R_0^{\text{b}}
 &>&H(S_0\,|\,Y,Q),\label{eq:seq-rates-a}\\
 R_V^{\text{c}}
 &<&I(V;Y\,|\,S_0,Q)\quad\textnormal{or}\quad R_V^{\text{c}}=0,\label{eq:seq-rates-b}\\
 R_2^{\text{b}}
 &>&H(S_2\,|\,V,S_0,Y,Q),\label{eq:seq-rates-c}\\
 R_2^{\text{p}}
 &<&I(T_2;Y\,|\,V,S_0,S_2,Q) \quad\textnormal{or}\quad R_2^{\text{p}}=0,\label{eq:seq-rates-d}\\
 R_1^{\text{b}}
 &>&H(S_1\,|\,V,T_2,S_0,S_2,Y,Q),\label{eq:seq-rates-e}\\
 R_1^{\text{p}}
 &<&I(T_1;Y\,|\,V,T_2,S_0,S_1,S_2,Q).
 \label{eq:seq-rates-f}
\end{IEEEeqnarray}
\end{lemma}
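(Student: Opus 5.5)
We will follow the standard random-coding route: average the error probability over the random binning, the time-sharing sequence $Q^l$ and the superposition codebooks, show that this average is at most $2^{-\Omega(\sqrt l)}$, and then conclude that some deterministic realization does at least as well. Let $\mathcal E_0$ be the event that $(Q^l,V^l(M_V),T_1^l(M_V,M_1),T_2^l(M_V,M_2),S_0^l,S_1^l,S_2^l,Y^l)$ is not jointly $\delta_l$-typical, or that some $S_k^l$ is not $\delta_l$-typical. By the union bound, the error event is contained in $\mathcal E_0$ together with six events $\mathcal E_1,\ldots,\mathcal E_6$. Event $\mathcal E_j$ says that the $j$th test admits a wrong candidate, given that $\mathcal E_0$ fails and that tests $1,\ldots,j-1$ returned the correct objects. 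Since $\mathcal E_0$ fails, the true candidate always passes each test. So the only failure modes are atypicality and the appearance of a competing candidate.

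\emph{Step 1 (atypicality).} Conditioned on the messages, the tuple is IID according to the PMF in \eqref{eq:seq-strategy-pmf}. Hoeffding's inequality (or the Chernoff bound) applied to each empirical frequency gives $\Pr(\mathcal E_0)\le 2|\set Q||\set V|\cdots|\set Y|\,e^{-c\,l\delta_l^2}=2^{-\Omega(\sqrt l)}$. Here we use $l\delta_l^2=\sqrt l$ and the fact that every letter of positive probability has mass at least $p_{\min}$.

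\emph{Step 2 (binning tests 1, 3, 5).} Take test~1 as the model case. A wrong $\hat s_0^l\neq S_0^l$ lies in bin $\ell_0$ with probability $2^{-lR_0^{\mathrm b}}$, and this is independent of $Y^l$, because bin assignments of distinct sequences are independent of each other and of the channel. The number of $\hat s_0^l$ with $(Q^l,\hat s_0^l,Y^l)$ jointly typical is at most $2^{l(H(S_0|Y,Q)+\tau_l)}$, by the conditional type-class bound in Corollary~4.8 of \cite{moser2025advanced}. Hence $\Pr(\mathcal E_1)\le 2^{-l(R_0^{\mathrm b}-H(S_0|Y,Q)-\tau_l)}$, which by \eqref{eq:seq-rates-a} is eventually $2^{-\Omega(l)}$. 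Tests~3 and~5 are handled identically, with conditioning sets $(Q,V,S_0,Y)$ and $(Q,V,T_2,S_0,S_2,Y)$, using \eqref{eq:seq-rates-c} and \eqref{eq:seq-rates-e}.

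\emph{Step 3 (packing tests 2, 4, 6).} For test~2, a wrong $V^l(\hat m_V)$ is drawn $\sim\prod P_{V|Q}$ independently of $(S_0^l,Y^l)$, since $S_0$ is independent of $(V,T_1,T_2)$ given $Q$. The packing bound therefore gives probability at most $2^{-l(I(V;S_0,Y|Q)-\tau_l)}$ per candidate. Because $I(V;S_0|Q)=0$, the exponent equals $I(V;Y|S_0,Q)$, and \eqref{eq:seq-rates-b} suffices. For test~4, the wrong $T_2^l(m_V,\hat m_2)$ is drawn given $(V^l,Q^l)$ independently of $(S_0^l,S_2^l,Y^l)$, giving exponent $I(T_2;S_0,S_2,Y|V,Q)=I(T_2;Y|V,S_0,S_2,Q)$ by \eqref{eq:strategy_markova}. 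For test~6 the exponent is $I(T_1;Y,T_2,S_0,S_1,S_2|V,Q)=I(T_1;Y|V,T_2,S_0,S_1,S_2,Q)$, using \eqref{eq:strategy_markova} and \eqref{eq:strategy_markovb}. When a rate is zero, there is only one candidate and the corresponding error event is empty.

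\emph{Main obstacle.} The delicate point is independence in the binning and packing steps. The decoded $\hat s_k^l$ and $\hat m$ are chosen jointly with the already-decoded objects, so we need the competing bin memberships and codewords to be independent of the true quantities when conditioned on the previously decoded correct values. This requires care: sequences lying in the same bin are not independent of the event $L_k(S_k^l)=\ell_k$. We handle this by conditioning on $S_k^l$, which leaves the bins of the other sequences unaffected. We also use that the penalty $\tau_l$ of \eqref{eq:nr-tau-definition} covers every typicality slack, so each strict rate gap beats $\tau_l\to0$ for large $l$. Summing all terms, the averaged error probability is $2^{-\Omega(\sqrt l)}$, dominated by $\mathcal E_0$, and the existence of a deterministic code follows.
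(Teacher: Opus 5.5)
Your proposal is correct and takes essentially the paper's route. You use random binning with superposition codebooks, bound atypicality by $2^{-\Omega(\sqrt l)}$ (via Hoeffding where the paper cites Corollary~4.8 of \cite{moser2025advanced}), bound tests 1, 3, 5 by counting plus bin probabilities and tests 2, 4, 6 by packing, and then average over the ensemble. Your events $\mathcal E_j$ coincide with the paper's genie-aided events intersected with $\mathcal E_{\text{typ}}^{\text c}$ via the first-failing-test argument, provided your ``given that'' means intersection rather than conditioning on events that depend on $Y^l$, which is how the paper avoids that issue.
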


\begin{proof}
  By symmetry, we may assume WLG that $(M_V,M_1,M_2)=(1,1,1)$. Let
  $\mathcal E_{\text{typ}}$ be the event that the transmitted tuple
\begin{IEEEeqnarray}{c}
 (Q^l,V^l(1),T_1^l(1,1),T_2^l(1,1),
 S_0^l,S_1^l,S_2^l,Y^l)
 \label{eq:seq-true-tuple}
\end{IEEEeqnarray}
is not jointly $\delta_l$-typical. Let
\begin{IEEEeqnarray}{c}
 \kappa\triangleq
 |\set{Q}||\set{V}||\set{T}_1||\set{T}_2|
 |\set{S}_0||\set{S}_1||\set{S}_2||\set{Y}|.
\end{IEEEeqnarray}
By the cardinality bounds established in Remark~\ref{rem:cardinality},
$\kappa$ is finite.

By the finite-alphabet large-deviation bound in
Corollary~4.8 of \cite{moser2025advanced},
\begin{IEEEeqnarray}{rCl}
\IEEEyesnumber
\IEEEyessubnumber*
 \Pr(\mathcal E_{\text{typ}})
 &\leq&(l+1)^\kappa
 2^{-l\,\delta_l^2\log e/(2\kappa^2)}\\
 &=&2^{-l\,E_{\text{typ},l}},
 \label{eq:seq-typical-bound}
\end{IEEEeqnarray}
where
\begin{IEEEeqnarray}{c}
 E_{\text{typ},l}
 \triangleq\frac{\delta_l^2\log e}{2\kappa^2}
 -\frac{\kappa\log(l+1)}{l},
 \label{eq:seq-typical-exponent}
\end{IEEEeqnarray}
and $l\,E_{\text{typ},l}\to\infty$ by \eqref{eq:seq-delta}.

By \eqref{eq:nr-pmin}--\eqref{eq:nr-tau-definition}, the same corollary shows that the finite-block
slack in each conditional type-class and packing bound below is at most
$\tau_l$.

We do not condition on correct decisions at preceding stages, since these
events depend on $Y^l$ and can change the channel statistics. Instead, we use
the genie-aided proof technique in
\cite[Appendix~C]{grant2001rate} and \cite[Section~6.3]{WozencraftJacobs1966}. At each stage, we assume there exists a genie that supplies the true values of all objects preceding that stage in
\eqref{eq:seq-decoding-order} to the decoder. To be specific, define
\begin{IEEEeqnarray}{rCl} \label{eq:seq-genie-events}
\IEEEyesnumber
\IEEEyessubnumber*
 \mathcal E_{S_0}
 &\triangleq&\{\text{test 1 fails}\},\\
 \mathcal G_V
 &\triangleq&\{\text{test 2 fails with the true $S_0^l$ supplied}\},
 \\
 \mathcal G_{S_2}
 &\triangleq&\{\text{test 3 fails with the true $(S_0^l,M_V)$ supplied}\},
 \\
 \mathcal G_2
 &\triangleq&\{\text{test 4 fails with the true $(S_0^l,M_V,S_2^l)$
 supplied}\},\\
 \mathcal G_{S_1}
 &\triangleq&\{\text{test 5 fails with the true
 $(S_0^l,M_V,S_2^l,M_2)$ supplied}\},\\
 \mathcal G_1
 &\triangleq&\{\text{test 6 fails with the true
 $(S_0^l,M_V,S_2^l,M_2,S_1^l)$ supplied}\}.
\end{IEEEeqnarray}
For example, in $\mathcal G_V$ the decoder is given the true $S_0^l$ and
performs the second test without requiring the first test to succeed.
All probabilities below are under the original random-coding law. We bound
each genie-aided error intersected with
$\mathcal E_{\text{typ}}^{\text c}$.

\emph{Error $\mathcal E_{S_0}$: a wrong $S_0^l$ passes the first
test.} 
For each realized $(q^l,y^l)$, the number of sequences $s_0'^l$ that are
jointly typical with $(q^l,y^l)$ is at most
\begin{IEEEeqnarray}{c}
 2^{l\,(H(S_0\,|\,Y,Q)+\tau_l)}.
 \label{eq:seq-s0-count}
\end{IEEEeqnarray}
Every such competing sequence is assigned the true bin label independently
with probability $2^{-l\,R_0^{\text{b}}}$. If $N_{S_0}$ denotes the number of
wrong competitors in that bin, then Markov's inequality gives

\begin{IEEEeqnarray}{rCl}
\IEEEyesnumber
\IEEEyessubnumber*
 \Pr(\mathcal E_{S_0}\cap\mathcal E_{\text{typ}}^{\text c})
 &\leq&\Pr(N_{S_0}\geq1,
             \mathcal E_{\text{typ}}^{\text c})\\
 &\leq&\mathbb{E}\bigl[N_{S_0}
             \mathbb{I}\{\mathcal E_{\text{typ}}^{\text c}\}\bigr]\\
 &\leq&2^{-l\,E_{S_0,l}},
 \label{eq:seq-s0-bound}
\end{IEEEeqnarray}
where
\begin{IEEEeqnarray}{c}
 E_{S_0,l}\triangleq
 R_0^{\text{b}}-H(S_0\,|\,Y,Q)-\tau_l.
 \label{eq:seq-s0-exponent}
\end{IEEEeqnarray}

\emph{Error $\mathcal G_V$: a wrong common message passes the
second test.}
Here the genie supplies the true $S_0^l$, and the decoder applies the second
test with $\hat s_0^l=S_0^l$. 
There are at most $2^{l\,R_V^{\text{c}}}$ wrong cloud center codewords. For every
$m_V'\neq1$, $V^l(m_V')$ is independent of $(S_0^l,Y^l)$ given $Q^l$.
The joint-typicality packing bound therefore gives
\begin{IEEEeqnarray}{rCl}
\IEEEyesnumber
\IEEEyessubnumber*
 &&\Pr\left(
 (Q^l,V^l(m_V'),S_0^l,Y^l)
 \in\mathcal T_{\delta_l}^{(l)}
 \right)\\
 &&\quad\leq
 2^{-l(I(V;S_0,Y\,|\,Q)-\tau_l)}\\
 &&\quad=
 2^{-l(I(V;Y\,|\,S_0,Q)-\tau_l)},
 \label{eq:seq-v-single}
\end{IEEEeqnarray}
where the equality uses $V\perp S_0$ given $Q$. Consequently,

\begin{IEEEeqnarray}{c}
 \Pr(\mathcal G_V\cap\mathcal E_{\text{typ}}^{\text c})
 \leq2^{-l\,E_{V,l}},
 \label{eq:seq-v-bound}
\end{IEEEeqnarray}
where, for notational convenience, we set
\begin{IEEEeqnarray}{c}
E_{V,l}\triangleq
\begin{cases}
I(V;Y\,|\,S_0,Q)-R_V^{\text{c}}-\tau_l,
& R_V^{\text{c}}>0,\\
+\infty, & R_V^{\text{c}}=0.
\end{cases}
\label{eq:seq-v-exponent}
\end{IEEEeqnarray}
When $R_V^{\text{c}}=0$, the cloud-center codebook is a singleton, so
$\mathcal G_V\cap\mathcal E_{\text{typ}}^{\text c}=\emptyset$.

\emph{Error $\mathcal G_{S_2}$: a wrong $S_2^l$ passes the third
test.} Here the genie supplies the true $(S_0^l,M_V)$, and the decoder applies
the third test with $\hat s_0^l=S_0^l$ and $\hat m_V=M_V$. For the genie-aided third test and fixed $(q^l,v^l,s_0^l,y^l)$, the
conditional type-class bound gives at most
\begin{IEEEeqnarray}{c}
 2^{l(H(S_2\,|\,V,S_0,Y,Q)+\tau_l)}
 \label{eq:seq-s2-count}
\end{IEEEeqnarray}
jointly typical $s_2'^l$ sequences. Each has the true bin label with
probability $2^{-l\,R_2^{\text{b}}}$. Thus

\begin{IEEEeqnarray}{c}
 \Pr(\mathcal G_{S_2}\cap\mathcal E_{\text{typ}}^{\text c})
 \leq2^{-l\,E_{S_2,l}},
 \label{eq:seq-s2-bound}
\end{IEEEeqnarray}
where
\begin{IEEEeqnarray}{c}
 E_{S_2,l}\triangleq
 R_2^{\text{b}}-H(S_2\,|\,V,S_0,Y,Q)-\tau_l.
 \label{eq:seq-s2-exponent}
\end{IEEEeqnarray}

\emph{Error $\mathcal G_2$: a wrong Encoder~2 satellite message
passes the fourth test.} Here the genie supplies the true $(S_0^l,M_V,S_2^l)$, and the decoder
applies the fourth test with $\hat s_0^l=S_0^l$, $\hat m_V=M_V$, and
$\hat s_2^l=S_2^l$.
For each $m_2'\neq1$, the codeword $T_2^l(1,m_2')$ is independent of
$Y^l$ given $(Q^l,V^l(1),S_0^l,S_2^l)$. Hence
\begin{IEEEeqnarray}{c}
\Pr\left(
 (Q^l,V^l(1),T_2^l(1,m_2'),S_0^l,S_2^l,Y^l)
 \in\mathcal T_{\delta_l}^{(l)}
 \right)\leq
 2^{-l(I(T_2;Y\,|\,V,S_0,S_2,Q)-\tau_l)}.
 \label{eq:seq-m2-single}
\end{IEEEeqnarray}
There are at most $2^{l\,R_2^{\text{p}}}$ competitors, and therefore

\begin{IEEEeqnarray}{c}
 \Pr(\mathcal G_2\cap\mathcal E_{\text{typ}}^{\text c})
 \leq2^{-l\,E_{2,l}},
 \label{eq:seq-m2-bound}
\end{IEEEeqnarray}
where
\begin{IEEEeqnarray}{c}
E_{2,l}\triangleq
\begin{cases}
I(T_2;Y\,|\,V,S_0,S_2,Q)-R_2^{\text{p}}-\tau_l,
& R_2^{\text{p}}>0,\\
+\infty, & R_2^{\text{p}}=0.
\end{cases}
\label{eq:seq-m2-exponent}
\end{IEEEeqnarray}
When $R_2^{\text{p}}=0$, the Encoder~2 satellite codebook is a
singleton, so
$\mathcal G_2\cap\mathcal E_{\text{typ}}^{\text c}=\emptyset$.

\emph{Error $\mathcal G_{S_1}$: a wrong $S_1^l$ passes the fifth
test.} Here the genie supplies the true $(S_0^l,M_V,S_2^l,M_2)$, and the decoder
applies the fifth test with $\hat s_0^l=S_0^l$, $\hat m_V=M_V$,
$\hat s_2^l=S_2^l$, and $\hat m_2=M_2$.
For the genie-aided fifth test and every realized
$(q^l,v^l,t_2^l,s_0^l,s_2^l,y^l)$, at most
\begin{IEEEeqnarray}{c}
 2^{l(H(S_1\,|\,V,T_2,S_0,S_2,Y,Q)+\tau_l)}
 \label{eq:seq-s1-count}
\end{IEEEeqnarray}
state sequences pass the typicality test. Random binning places each wrong
sequence in the true bin with probability $2^{-l\,R_1^{\text{b}}}$, so

\begin{IEEEeqnarray}{c}
 \Pr(\mathcal G_{S_1}\cap\mathcal E_{\text{typ}}^{\text c})
 \leq2^{-l\,E_{S_1,l}},
 \label{eq:seq-s1-bound}
\end{IEEEeqnarray}
where
\begin{IEEEeqnarray}{c}
 E_{S_1,l}\triangleq
 R_1^{\text{b}}-H(S_1\,|\,V,T_2,S_0,S_2,Y,Q)-\tau_l.
 \label{eq:seq-s1-exponent}
\end{IEEEeqnarray}

\emph{Error $\mathcal G_1$: a wrong Encoder~1 satellite message
passes the last test.} Here the genie supplies the true $(S_0^l,M_V,S_2^l,M_2,S_1^l)$, and the
decoder applies the last test with $\hat s_0^l=S_0^l$, $\hat m_V=M_V$,
$\hat s_2^l=S_2^l$, $\hat m_2=M_2$, and $\hat s_1^l=S_1^l$.
For every $m_1'\neq1$, $T_1^l(1,m_1')$ is independent of $Y^l$ given
$(Q^l,V^l(1),T_2^l(1,1),S_0^l,S_1^l,S_2^l)$. Thus
\begin{IEEEeqnarray}{rCl}
 \Pr\Bigl(
 (Q^l,V^l(1),&&T_1^l(1,m_1'),T_2^l(1,1),
 S_0^l,S_1^l,S_2^l,Y^l)
 \in\mathcal T_{\delta_l}^{(l)}
 \Bigr)\nonumber\\
 &&\quad\leq
 2^{-l(I(T_1;Y\,|\,V,T_2,S_0,S_1,S_2,Q)-\tau_l)}.
 \label{eq:seq-m1-single}
\end{IEEEeqnarray}
After a union bound over at most $2^{l\,R_1^{\text{p}}}$ competitors,

\begin{IEEEeqnarray}{c}
 \Pr(\mathcal G_1\cap\mathcal E_{\text{typ}}^{\text c})
 \leq2^{-l\,E_{1,l}},
 \label{eq:seq-m1-bound}
\end{IEEEeqnarray}
where
\begin{IEEEeqnarray}{c}
 E_{1,l}\triangleq
 I(T_1;Y\,|\,V,T_2,S_0,S_1,S_2,Q)-R_1^{\text{p}}-\tau_l.
 \label{eq:seq-m1-exponent}
\end{IEEEeqnarray}

Let

\begin{IEEEeqnarray}{c}
 E_l\triangleq
 \min\{E_{S_0,l},E_{V,l},E_{S_2,l},E_{2,l},E_{S_1,l},E_{1,l}\}.
 \label{eq:seq-min-exponent}
\end{IEEEeqnarray}
If the actual decoder makes an error on
$\mathcal E_{\text{typ}}^{\text c}$, consider its first failing test.
All preceding decisions are then correct. Therefore, at this test, the
actual decoder uses exactly the same preceding objects as those supplied
by the genie in the corresponding genie-aided test. Consequently, the
occurrence of this first failure implies the occurrence of the
corresponding genie-aided error event. Hence,
\begin{IEEEeqnarray}{c}
\{\text{actual decoder errors}\}
 \cap\mathcal E_{\text{typ}}^{\text c}
 \subseteq
 \bigl(\mathcal E_{S_0}\cup\mathcal G_V\cup\mathcal G_{S_2}
 \cup\mathcal G_2\cup\mathcal G_{S_1}\cup\mathcal G_1\bigr)
 \cap\mathcal E_{\text{typ}}^{\text c}.
 \label{eq:seq-first-error-genie}
\end{IEEEeqnarray}
Therefore,
\begin{IEEEeqnarray}{rCl}
\IEEEyesnumber
\IEEEyessubnumber*
 P_{\text e}^{(l)}
 &\leq&\Pr(\mathcal E_{\text{typ}})
 +\Pr(\mathcal E_{S_0}\cap\mathcal E_{\text{typ}}^{\text c})
 +\Pr(\mathcal G_V\cap\mathcal E_{\text{typ}}^{\text c})
 +\Pr(\mathcal G_{S_2}\cap\mathcal E_{\text{typ}}^{\text c})
 \nonumber\\
 &&{}+\Pr(\mathcal G_2\cap\mathcal E_{\text{typ}}^{\text c})
 +\Pr(\mathcal G_{S_1}\cap\mathcal E_{\text{typ}}^{\text c})
 +\Pr(\mathcal G_1\cap\mathcal E_{\text{typ}}^{\text c})\\
 &\leq&2^{-l\,E_{\text{typ},l}}+6\cdot2^{-l\,E_l}.
 \label{eq:seq-oneblock-error}
\end{IEEEeqnarray}
Since $\tau_l\to0$, each strict rate condition in
\eqref{eq:seq-rates-a}--\eqref{eq:seq-rates-f} makes the corresponding
error exponent positive for all sufficiently large $l$.
Together with $l\,E_{\mathrm{typ},l}\to\infty$, this implies
$P_{\mathrm e}^{(l)}\to0$. Averaging over
the random construction gives the existence of a deterministic one-block code with the same
vanishing-error property.
\end{proof}

The one-block lemma specifies only the binning and bit-pipe rates
supported by the successive decoder. The message-rate allocation to
fresh messages and previous-block state labels is postponed to the
interblock construction.

\begin{remark}\label{rem:red_favoring}
  We can also favor Encoder~2 by reversing the decoding order
of~\eqref{eq:seq-decoding-order} to have it be 
\begin{IEEEeqnarray}{c}
 S_0^l \longrightarrow M_V \longrightarrow S_1^l\longrightarrow M_1
 \longrightarrow S_2^l\longrightarrow M_2.
 \label{eq:red_seq-decoding-order}
\end{IEEEeqnarray}
This will demonstrate achievability whenever
\begin{IEEEeqnarray}{rCl}\label{block:red_lemma1}
\IEEEyesnumber
\IEEEyessubnumber*
R_0^{\text{b}}
&>&H(S_0\,|\,Y,Q),\label{eq:seq-rates-enc2-a}\\
R_V^{\text{c}}
&<&I(V;Y\,|\,S_0,Q)\quad\textnormal{or}\quad R_V^{\text{c}}=0,\label{eq:seq-rates-enc2-b}\\
R_1^{\text{b}}
&>&H(S_1\,|\,V,S_0,Y,Q),\label{eq:seq-rates-enc2-c}\\
R_1^{\text{p}}
&<&I(T_1;Y\,|\,V,S_0,S_1,Q) \quad\textnormal{or}\quad R_1^{\text{p}}=0,\label{eq:seq-rates-enc2-d}\\
R_2^{\text{b}}
&>&H(S_2\,|\,V,T_1,S_0,S_1,Y,Q),\label{eq:seq-rates-enc2-e}\\
R_2^{\text{p}}
&<&I(T_2;Y\,|\,V,T_1,S_0,S_1,S_2,Q).
\label{eq:seq-rates-enc2-f}
\end{IEEEeqnarray}
\end{remark}
Note that Inequalities~\eqref{block:red_lemma1} are obtained from
  those of \eqref{block:lemma1} by swapping the index~1 and the
  index~2 in all quantities of interest (namely, the rate indices, the
  state indices, and the strategy indices).

\subsection{Rate selection and interblock allocation}

We next describe the rate allocation that allows us to approach the
vertices of the polytope; see Section~\ref{sec:seq-net-rates}.
To guarantee that the required inequalities hold,
we choose
the rates $R_0^{\text{b}}$, $R_1^{\text{b}}$, $R_1^{\text{p}}$,
$R_2^{\text{b}}$, $R_2^{\text{p}}$, in Lemma~\ref{lem:seq-oneblock},
for a sufficiently small $\epsilon>0$, as
\begin{subequations}
  \label{block:amos_aaa}
\begin{IEEEeqnarray}{rCl}
\IEEEyesnumber*
R_0^{\text{b}}
&=&H(S_0\,|\,Y,Q)+\epsilon,\label{eq:seq-closed-rates-a}\\
R_2^{\text{b}}
&=&H(S_2\,|\,V,S_0,Y,Q)+\epsilon,\\
R_2^{\text{p}}
&=&\max\bigl\{I(T_2;Y\,|\,V,S_0,S_2,Q)-\epsilon,0\bigr\},\\
R_1^{\text{b}}
&=&H(S_1\,|\,V,T_2,S_0,S_2,Y,Q)+\epsilon,\\
R_1^{\text{p}}
&=&I(T_1;Y\,|\,V,T_2,S_0,S_1,S_2,Q)-\epsilon.
\label{eq:seq-closed-rates-e}
\end{IEEEeqnarray}
\end{subequations}
As to rate $R_V^{\text{c}}$ of the common pipe, we note
that~\eqref{eq:seq-rates-b} requires either
$R_V^{\text{c}}<I(V;Y\,|\,S_0,Q)$ or $R_V^{\text{c}}=0$, so we do not
want it to exceed $\max\{I(V;Y\,|\,S_0,Q)-\epsilon,0\}$. And since it
is only used to convey all (or some) of the bits that index the
$S_{0}$-bin, we cap it at $R_0^{\text{b}}$. We 
therefore choose
\begin{IEEEeqnarray}{c}
R_V^{\text{c}}
=\min\Bigl\{R_0^{\text{b}},
\max\{I(V;Y\,|\,S_0,Q)-\epsilon,0\}\Bigr\}.
\label{eq:seq-common-rate}
\end{IEEEeqnarray}
(Recall that if indexing the $S_{0}$-bin requires more bits than the common message can
carry, the remaining bits are sent over the other bit pipes.)

We next show how the bit pipes are used to approach the different
vertices of the polytope.

\subsubsection{Blue vertex}\label{sec:blue_achie}
We use the Encoder~1-favoring decoding order of
Section~\ref{sec:seq-oneblock}. The common message carries
$l\,R_V^{\text{c}}$ of the $l\, R_{0}^{\text{b}}$ bits that index the
$S_{0}$-bin. Encoder~1's bit pipe 
carries Encoder~1's fresh message and the preceding
$l\,R_1^{\text{b}}$ bits that index the $S_{1}$-bin. Encoder~2's
bit pipe 
carries its fresh message, the preceding $l\,R_2^{\text{b}}$ bits that index
the $S_2$-bin, and the remaining $l(R_0^{\text{b}}-R_V^{\text{c}})$
bits that couldn't be carried by the common bit pipe and that are still
needed to complete the description of the $S_{0}$-bin. 
By~\eqref{block:amos_aaa}, this
allocation allows for the transmission rate
\begin{IEEEeqnarray}{rCl}
 \label{eq:seq-blue-first}
\IEEEyesnumber
\IEEEyessubnumber*
R_1 &=&  R_1^{\text{p}}-R_1^{\text{b}}
 =I(T_1;Y\,|\,V,T_2,S_0,S_1,S_2,Q)     -H(S_1\,|\,V,T_2,S_0,S_2,Y,Q)-2\epsilon\\
 &=&A_1-2\epsilon.
\end{IEEEeqnarray}
As to $R_{2}$, the above allocation results in
\begin{IEEEeqnarray}{rCl}
  R_{2} & = & R_2^{\text{p}}
-\bigl(R_2^{\text{b}}+(R_0^{\text{b}}-R_V^{\text{c}})\bigr).
\end{IEEEeqnarray}

Using the fact that---by the definitions
in~\eqref{eq:seq-region-parameters}, the chain rule, and the
independence of $S_{0}$ and $Q$--- 
\begin{IEEEeqnarray}{rCl}
\IEEEyesnumber
\IEEEyessubnumber*
 C_{\rm p}-A_1
 &=&I(T_2,S_2;Y\,|\,V,S_0,Q)-H(S_2)\\
 C_{\rm t}-A_1
 &=&I(V,T_2,S_2;Y\,|\,S_0,Q)
     -H(S_0\,|\,Y,Q)-H(S_2)
 \label{eq:seq-blue-residuals}
\end{IEEEeqnarray}
we obtain from~\eqref{block:amos_aaa}--\eqref{eq:seq-common-rate} that
\begin{IEEEeqnarray}{rCl}
\lim_{\epsilon \downarrow 0} \Bigl( R_2^{\text{p}}
-\bigl(R_2^{\text{b}}+(R_0^{\text{b}}-R_V^{\text{c}})\bigr) \Bigr)
& = &
\min\{C_{\rm p}-A_1,C_{\rm t}-A_1\}
=C-A_1.
\label{eq:seq-blue-second}
\end{IEEEeqnarray}
As $C>A_1$ (blue vertex), the right-hand side is strictly
positive. Hence, for all sufficiently small $\epsilon>0$, both
fresh-message rates are strictly positive.  Therefore, our scheme produces a
rate pair approaching
\begin{IEEEeqnarray}{c}
  (A_1,C-A_1).
  \label{eq:noise123}
\end{IEEEeqnarray}

\subsubsection{Red vertex}\label{sec:red_achie}
The red vertices occur when $C \leq A_1$, which is thus what we now
assume. They have the form $(C,0)$. The single-block scheme we propose
to approach this vertex is a time-sharing scheme between the Encoder~1-favoring scheme of Lemma~\ref{lem:seq-oneblock} and the Encoder~2-favoring scheme of Remark~\ref{rem:red_favoring}.

The formal calculation we performed to approach the blue vertex using
the Encoder~1-favoring decoding order yields (as in~\eqref{eq:noise123})
the ``bookkeeping'' pair
\begin{subequations}
  \begin{IEEEeqnarray}{rCl}
    \mathbf r^{(1)} & =& (A_1,C-A_1).
  \end{IEEEeqnarray}

  We can also perform a similar formal calculation for the
  Encoder~2-favoring decoding order: Recall that
  Inequalities~\eqref{block:red_lemma1} are obtained from those of
  \eqref{block:lemma1} by swapping the index~1 and the index~2 in all
  quantities of interest (namely, the rate indices, the state indices,
  and the strategy indices). This also results in the swapping of
  $A_{1}$ with $A_{2}$. Thus, the formal calculation yields the
  bookkeeping pair
  \begin{IEEEeqnarray}{rCl}
    \mathbf r^{(2)}& = &(C-A_2,A_2).
  \end{IEEEeqnarray}
\end{subequations}

We used ``formal,'' and ``bookkeeping pair'' (and not ``rate pair'')
because some of the components of $\mathbf r^{(1)}$ and
$\mathbf r^{(2)}$ may be negative. Negative components merely mean
that the bitpipes cannot support the transmission of the required part
of the bin-index. It thus only represents a possible deficit after
subtracting the bin-index bits; it does not represent a negative
physical transmission rate. This is not an issue as long as the
components that result from time-sharing are nonnegative, i.e., as
long as the bitpipes we obtain after time sharing can support the
transmission of the bin indices.

Since
$\set R_{<}(P)$ is nonempty, we have $A_2>0$. Together with
$C\leq A_1$, this implies that $A_1+A_2-C>0$.

For sufficiently small $\epsilon>0$, choose the duty cycle
\begin{IEEEeqnarray}{c}
\alpha_\epsilon
\triangleq
\frac{A_2}{A_1+A_2-C}-\sqrt{\epsilon}.
\label{eq:seq-red-weight}
\end{IEEEeqnarray}
With this choice, $0<\alpha_\epsilon<1$ and
\begin{IEEEeqnarray}{rCl}
&&\alpha_\epsilon\mathbf r^{(1)}
+(1-\alpha_\epsilon)\mathbf r^{(2)}
\nonumber\\
&=&
\bigl(
C-(A_1+A_2-C)\sqrt{\epsilon},
(A_1+A_2-C)\sqrt{\epsilon}
\bigr),
\label{eq:seq-red-combination}
\end{IEEEeqnarray}
which converges to $(C,0)$ as $\epsilon \downarrow 0$.

Time-sharing can be implemented 
by dividing the length-$l$ block into two subblocks
of lengths
\begin{IEEEeqnarray}{c}
l_1=\lfloor\alpha_\epsilon l\rfloor,
\qquad l_2=l-l_1.
\label{eq:seq-red-lengths}
\end{IEEEeqnarray}
In the first subblock we employ the Encoder~1-favoring decoding order,
and in the second the Encoder~2-favoring order. The rates in
the second subblock are the symmetric counterparts of
\eqref{block:amos_aaa}--\eqref{eq:seq-common-rate}, i.e., the rates
obtained by swapping the index 1 and the index 2 in all the variables.
%
%
Let $(R_0^{\text{b},(q)},R_1^{\text{b},(q)},
R_2^{\text{b},(q)},R_V^{\text{c},(q)},
R_1^{\text{p},(q)},R_2^{\text{p},(q)})$
denote the rates in the subblock~$q$, where $q \in \{1,2\}$.

In each subblock, the common pipe carries as many bits of the
preceding block's $S_0$-bin index as its rate permits.  Some of the
remaining $S_0$-bin-index bits (if any) are assigned
to Encoder~2's bit pipe in the Encoder~1-favoring subblock (the first
subblock), and the
rest are assigned to Encoder~1's bit pipe in the Encoder~2-favoring
subblock (the second subblock). 

In each subblock, each encoder uses its bit-pipe rate to carry its
fresh message and all bin-index bits assigned to it:
its own state bin-index bits and possibly some of the remaining
$S_0$-bin-index bits. Consequently, the fresh-message rates equal the
total bit-pipe rates minus the total assigned bin-index rates:
\begin{IEEEeqnarray}{rCl}
 \label{eq:seq-red-subtraction}
\IEEEyesnumber
\IEEEyessubnumber*
 R_1^{(l)}
 &=&\frac{1}{l}\Bigl\{
 l_1(R_1^{\text{p},(1)}-R_1^{\text{b},(1)})
 +l_2\bigl[R_1^{\text{p},(2)}-R_1^{\text{b},(2)}
             -(R_0^{\text{b},(2)}-R_V^{\text{c},(2)})\bigr]
 \Bigr\},\\
 R_2^{(l)}
 &=&\frac{1}{l}\Bigl\{
 l_1\bigl[R_2^{\text{p},(1)}-R_2^{\text{b},(1)}
             -(R_0^{\text{b},(1)}-R_V^{\text{c},(1)})\bigr]
 +l_2(R_2^{\text{p},(2)}-R_2^{\text{b},(2)})
 \Bigr\}.
\end{IEEEeqnarray}

Having described the time-sharing scheme, we can return to the
limiting rate: The positive second coordinate in
\eqref{eq:seq-red-combination} is of order $\sqrt{\epsilon}$, whereas
the rate loss caused by the coding slack is of order $\epsilon$.
Consequently, for all sufficiently small fixed $\epsilon>0$ and all
sufficiently large $l$, both fresh-message rates (especially
$R_2^{(l)}$) in \eqref{eq:seq-red-subtraction} are strictly
positive. Moreover,
\begin{IEEEeqnarray}{c}
\bigl(R_1^{(l)},R_2^{(l)}\bigr)\longrightarrow(C,0),
\end{IEEEeqnarray}
where the limit is taken first as $l\to\infty$ for fixed $\epsilon$
and then as $\epsilon\downarrow0$.


\subsection{The Terminating Block and Backward Decoding}
\label{sec:seq-compound}

To conclude the achievability proof, we now describe the last block
and the backward decoding.

In the last block no fresh data are sent. Its purpose is to describe
the bins containing the state sequences that prevailed during the
previous (next-to-last) block. Recall that in the next-to-last block
(as in all previous blocks) bins were generated for the different
state sequences. Those are used in the last block. More specifically,
Encoder~1 conveys to the decoder the index of the $S_{1}$-bin that
contains the $S_{1}$-state sequence that prevailed in the next-to-last
block. Likewise, Encoder~2 conveys the index of the $S_{2}$-bin, and
the two encoders can cooperate in sending the index of the
$S_{0}$-bin. The length $l'$ of the last block is chosen so as to
allow this transmission (with causal state information).
Assumption~\ref{ass:nontri} guarantees that $l'$ is within a constant
factor of $l$, i.e., $l'(l)=O(l)$.

Not sending data in the last block incurs a rate-loss that can be made
arbitrarily small by having the number of blocks large. Also, the states
prevailing in the last block are not conveyed to the receiver, and
they thus contribute to the symbol error-rate in the state's
reconstruction. Again, this contribution to the symbol error-rate is
arbitrarily small if the number of blocks is large.

Once the last block has terminated, the decoder can recover the
indices of the bins containing the state sequences that prevailed
during the next-to-last block and can thus proceed to decoding the
next-to-last transmission. This backward decoding continues until the
first block is decoded.

Note that the coding slack $\epsilon$ used above is distinct from the
$\epsilon_0$ in the definition of achievability. Given any neighborhood
$\mathcal N$ of either endpoint, choose $\epsilon$ sufficiently small
and let the blocklength and the number of blocks be large enough so that
the construction above yields a strictly positive rate pair
$(R_1,R_2)\in\mathcal N$ with vanishing message and
state-reconstruction errors. Since $\mathcal N$ is open, there exists
$\epsilon_0>0$ such that
\begin{IEEEeqnarray}{c}
    (R_1-\epsilon_0,R_2-\epsilon_0) \in \mathcal N\cap\mathbb R_{\geq0}^2.
\end{IEEEeqnarray}
So $(R_1-\epsilon_0,R_2-\epsilon_0)$ is
achievable by definition. Thus, every neighborhood of either endpoint
contains an achievable rate pair, which demonstrates that the vertex belongs to the
closure of the achievable rate pairs, namely, the capacity region.

\section*{Acknowledgment}
This work was supported by the Swiss National Science Foundation (SNSF) under Grant 200021-
215090.

\appendix
\section{Deferred Proofs}
\label{app:deferred-proofs}

\subsection{Proof of Remark~\ref{rem:common_state}}
\label{app:proof_common_state}

\begin{IEEEproof}[Proof of Remark~\ref{rem:common_state}]
  To prove the equivalence, it suffices to show that
  if one encoder cannot transmit at positive rate using Shannon
  strategies, then nor can it using general causal strategies. The
  statement will then follow using a time-sharing argument. Suppose, then, that $\set{S}_1 = \set{S}_2 = \emptyset$ and that every symbol in $\set{S}_0$
has positive probability under $P_{S_0}$. Without loss of
  generality, assume that it is Encoder~1 that cannot transmit
  reliably at any positive rate using Shannon strategies. 

This is
  equivalent to
    \begin{IEEEeqnarray}{c}\label{eq:shannon_inva}
        \tilde{W}(\cdot\,|\,t_1,t_2) = \tilde{W}(\cdot\,|\,t_1',t_2),\quad \forall t_1,t_1'\in\set{X}_1^{\set{S}_0},\,t_2\in\set{X}_2^{\set{S}_0},
    \end{IEEEeqnarray}
    where $\tilde{W}$ denotes the channel induced by treating the Shannon
    strategies as inputs
    \begin{IEEEeqnarray}{rCl}
      \tilde{W}(\cdot\,|\,t_1,t_2) & = & \sum_{s_0} W\bigl(\cdot\,|\,t_1(s_0),t_2(s_0),s_0\bigr)P_{S_0}(s_0)
    \end{IEEEeqnarray}
    and $\set{X}_1^{\set{S}_0}$ and $\set{X}_2^{\set{S}_0}$ denote the
    sets of Shannon strategies for Encoders~1 and~2, respectively.

    As we next argue, \eqref{eq:shannon_inva} implies that
    \begin{IEEEeqnarray}{c}\label{eq:symbol_inva}
        W(\cdot\,|\,x_1,x_2,s_0) = W(\cdot\,|\,x_1',x_2,s_0),\quad \forall x_1,x_1'\in\set{X}_1,\,x_2\in\set{X}_2\,,s_0\in\set S_0.
    \end{IEEEeqnarray}
    Indeed, if~\eqref{eq:symbol_inva} fails for some tuple
    $(\bar{x}_1,\bar{x}_1',\bar{x}_2,\bar{s}_0)$, then the strategies
    $(t_1,t_1',t_2)$ defined as
    \begin{IEEEeqnarray}{rCl}
    \IEEEyesnumber
    \IEEEyessubnumber*
    t_1(s_0) &=& \bar{x}_1,\;\text{for all $s_0$.}\\
        t'_1(s_0) &=& \begin{cases}
                   \bar{x}'_1 & s_0=\bar{s}_0\\
                   \bar{x}_1 & s_0\neq\bar{s}_0.
        \end{cases}\\
        t_2(s_0) &=& \bar{x}_2,\;\text{for all $s_0$.}
    \end{IEEEeqnarray}
    violate \eqref{eq:shannon_inva}.

    We next argue that~\eqref{eq:symbol_inva} implies that no positive
    rate tuple is achievable with any causal strategy (be it of the
    Shannon kind or not).
    
     To that end, consider any blocklength-$l$ causal strategies
     $g_1^l = \{g_{1,i} \colon \set{S}_0^{\times i}\to\set{X}_1\}_{i=1}^l$,
     $g_1^{'l} = \{g'_{1,i} \colon \set{S}_0^{\times i}\to\set{X}_1\}_{i=1}^l$, and
     $g_2^l = \{g_{2,i} \colon \set{S}_0^{\times i}\to\set{X}_2\}_{i=1}^l$.
     We claim that $(g_1^l,g_2^l)$ and $(g_1^{'l},g_2^l)$ induce identical
     distributions on $\set{Y}^{\times l}$. Indeed,
     \begin{IEEEeqnarray}{rCl}
     \IEEEyesnumber
     \IEEEyessubnumber*
         \text{Pr}\bigl\{Y^l = y^l\,|\, g_1^l, g_2^l\bigr\}& =&
         \sum_{s_0^l\in\set{S}_0^{\times l}}P_{S_0^l}(s_0^l) \, W^l\Bigl(y^l\,\Big|\,g_1^l(s_0^l),g_2^l(s_0^l),s_0^l\Bigr)\label{eq:prod_invaa}\\
         &=& \sum_{s_0^l\in\set{S}_0^{\times l}}P_{S_0^l}(s_0^l) \, W^l\Bigl(y^l\,\Big|\,g_1^{'l}(s_0^l),g_2^l(s_0^l),s_0^l\Bigr)\label{eq:prod_invab}\\
         &=&  \text{Pr}\bigl\{Y^l = y^l\,|\, g_1^{'l}, g_2^l\bigr\}\label{eq:prod_invac},
     \end{IEEEeqnarray}
     where \eqref{eq:prod_invab} follows by applying
     \eqref{eq:symbol_inva} to the $l$-fold product channel. We can now conclude that $R_1$ must be zero also for causal strategies.

    The same proof extends to general bipartite networks in the sense of
    \cite{lapidoth2026equivalence}, as well as to non IID state sequences.
    
    To prove the second statement of the remark, it suffices to
    provide an example. Let
    $\set{X}_1=\set{X}_2=\set{S}_0=\set{S}_2=\{0,1\}$, let
    $\set{S}_1=\emptyset$, and let $\set{Y}=\{0,1\} \times \{0,1\}$. The
    random variables $S_0$ and $S_2$ are independent
    $\operatorname{Ber}(0.5)$ random bits, and
     \begin{IEEEeqnarray}{rCl}
         Y&=&\begin{cases}
             (X_1\oplus S_2,S_0), & S_0=1,\\
             (X_2,S_0), & S_0=0.
         \end{cases}
     \end{IEEEeqnarray}
     To see that no $R_{1} > 0$ is achievable using Shannon
     strategies, consider any strategies  $t_1\colon \set{S}_0\to\set{X}_1$
     and $t_2 \colon \set{S}_0\times\set{S}_2\to\set{X}_2$ (where we have
     dropped the dependence of $t_{1}$ on $S_{1}$ because the latter
     is deterministic). When $S_0=1$, the output $t_1(1)\oplus S_2$ is
     uniform irrespective of $t_1$. When $S_0=0$, the output is
     $(t_2(0,S_2),0)$ and hence, also in this case, does not depend on
     $t_1$. Consequently, the induced Shannon-strategy channel
     satisfies
     \begin{IEEEeqnarray}{c}
         P_{Y|T_1T_2}(\cdot\,|\,t_1,t_2)
         =P_{Y|T_1T_2}(\cdot\,|\,t_1',t_2)
     \end{IEEEeqnarray}
     for every $t_1,t_1'$, and therefore every Shannon-strategy achievable
     rate pair has $R_1=0$.

     We next provide a causal coding scheme (that is not based on
     Shannon strategies) that allows us to communicate reliably at
     positive rates, in fact, at rates $(1/4,1/4)$. Our scheme is
     based on the observation that when $S_{0} = 1$ the reason the
     receiver cannot recover Encoder~1's input symbol is that the
     state $S_{2}$ masks it. Had $S_{2}$ been conveyed to the
     receiver, the latter could have subtracted it and recovered
     Encoder~1's symbol. In our scheme, Encoder~2 uses some of its
     throughput to convey $S_{2}$ to the receiver to facilitate this.
     
     For a fixed $0<\delta<1/4$, our scheme divides each sufficiently
     long blocklength-$n$ into two equal-length phases. During the
     first interval, Encoder~1 sends fresh information at the first
     $\lfloor n(1/4-\delta)\rfloor$ time instances at which
     $S_{0} = 1$ (with a block error if there are fewer than required
     such time instances). Note that, since $S_{0}$ is part of the
     channel output, these time instances are known to the
     receiver. Moreover, since $S_{0}$ is the common state, Encoder~2
     is also cognizant of these time instances. It can therefore
     record the values of $S_{2}$ at these times. (Encoder~2 will
     convey these values to the receiver in the second phase.)

     Also in the first phase, Encoder~2 sends its own fresh information in
     the first $\lfloor n(1/4-\delta)\rfloor$ positions at which
     $S_0=0$.

     In the second phase Encoder~2 uses the first
     $\lfloor n(1/4-\delta)\rfloor$ time instances at which $S_0=0$ to
     transmit the values of $S_{2}$ that it recorded in the first
     phase. All unused inputs are fixed dummy symbols.  The decoder
     observes $S_0$ as the second component of $Y$, decodes
     Encoder~2's bits directly, and uses the forwarded $S_2$ values to
     remove the masks from Encoder~1's stored outputs. By the law of
     large numbers, all three required collections of positions have
     the prescribed size with probability tending to one. Hence
     $(1/4-\delta,1/4-\delta)$ is achievable for every $\delta>0$, and
     closure gives $(R_1,R_2)=(1/4,1/4)$.
\end{IEEEproof}

\subsection{Proof of Remark~\ref{rem:cardinality}}
\label{app:proof-cardinality}

\begin{IEEEproof}
We apply Carath\'eodory's theorem in the form stated in
\cite[Theorem~1.31]{moser2025advanced}. Fix $q\in\set{Q}$ and an arbitrary
$y_\star\in\set{Y}$. For each $v$, consider the vector consisting of
\begin{IEEEeqnarray}{rCl}
    &&\bigl(P_{Y|VQ}(y\,|\,v,q):
        y\in\set{Y}\setminus\{y_\star\}\bigr),\nonumber\\
    &&H(Y\,|\,X_1,X_2,S_0,S_1,S_2,V=v,Q=q),\nonumber\\
    &&I(X_1,S_1;Y\,|\,X_2,S_0,S_2,V=v,Q=q),\nonumber\\
    &&I(X_2,S_2;Y\,|\,X_1,S_0,S_1,V=v,Q=q),\nonumber\\
    &&I(X_1,X_2,S_1,S_2;Y\,|\,S_0,V=v,Q=q).
    \label{eq:v-support-functionals}
\end{IEEEeqnarray}
These are $|\set{Y}|+3$ functions of $v$. Their average under $P_{V|Q}(\cdot\,|\,q)$ can
therefore be preserved using at most $|\set{Y}|+4$ values of $v$.

The last three entries of \eqref{eq:v-support-functionals} preserve the first
three mutual informations in \eqref{eq:region}. The first
$|\set{Y}|-1$ entries preserve $P_{Y|Q}(\cdot\,|\,q)$ and hence
$H(Y\,|\,Q=q)$, while the remaining entropy entry preserves
\begin{IEEEeqnarray}{rCl}
&&H(Y\,|\,X_1,X_2,S_0,S_1,S_2,Q=q)\nonumber\\
&&\quad=\sum_vP_{V|Q}(v\,|\,q)
H(Y\,|\,X_1,X_2,S_0,S_1,S_2,V=v,Q=q).
\end{IEEEeqnarray}
It follows from
\begin{IEEEeqnarray}{rCl}
&&I(X_1,X_2,S_0,S_1,S_2;Y\,|\,Q=q)\nonumber\\
&&\quad=H(Y\,|\,Q=q)
-H(Y\,|\,X_1,X_2,S_0,S_1,S_2,Q=q)
\end{IEEEeqnarray}
that the fourth mutual information in \eqref{eq:region} is preserved as well.
The retained support points continue to define input kernels of the required
product form in \eqref{eq:pmf}.

After this reduction, associate with each $q$ the vector in $\mathbb R^4$
whose entries are the four mutual informations on the right-hand side of
\eqref{eq:region}, before subtracting the state entropies. Their averages under
$P_Q$ are precisely the four mutual informations in \eqref{eq:region}.
A second application of Carath\'eodory's theorem preserves this average using
at most five values of~$q$. 
\end{IEEEproof}


\bibliographystyle{plain}  
\bibliography{./bibliofile}

\begin{thebibliography}{1}
\bibliographystyle{IEEEtran}

\bibitem{ref1}
{\it{Mathematics Into Type}}. American Mathematical Society. [Online]. Available: https://www.ams.org/arc/styleguide/mit-2.pdf

\bibitem{ref2}

\bibitem{ref3}
F. Mittelbach and M. Goossens, {\it{The \LaTeX Companion}}, 2nd ed. Boston, MA, USA: Pearson, 2004.

\bibitem{ref4}
G. Gr\"atzer, {\it{More Math Into LaTeX}}, New York, NY, USA: Springer, 2007.

\bibitem{ref5}M. Letourneau and J. W. Sharp, {\it{AMS-StyleGuide-online.pdf,}} American Mathematical Society, Providence, RI, USA, [Online]. Available: http://www.ams.org/arc/styleguide/index.html

\bibitem{ref6}
H. Sira-Ramirez, ``On the sliding mode control of nonlinear systems,'' \textit{Syst. Control Lett.}, vol. 19, pp. 303--312, 1992.

\bibitem{ref7}
A. Levant, ``Exact differentiation of signals with unbounded higher derivatives,''  in \textit{Proc. 45th IEEE Conf. Decis.
Control}, San Diego, CA, USA, 2006, pp. 5585--5590. DOI: 10.1109/CDC.2006.377165.

\bibitem{ref8}
M. Fliess, C. Join, and H. Sira-Ramirez, ``Non-linear estimation is easy,'' \textit{Int. J. Model., Ident. Control}, vol. 4, no. 1, pp. 12--27, 2008.

\bibitem{ref9}
R. Ortega, A. Astolfi, G. Bastin, and H. Rodriguez, ``Stabilization of food-chain systems using a port-controlled Hamiltonian description,'' in \textit{Proc. Amer. Control Conf.}, Chicago, IL, USA,
2000, pp. 2245--2249.

\end{thebibliography}

\end{document}